\newtheorem{thm}{Theorem}
\newtheorem{lem}{Lemma}
\newtheorem{proof}{proof}
\newtheorem{defn}{Definition}
\newtheorem{rem}{Remark}
\newtheorem{exam}{Example}
\begin{document}

\title{Non-binary Two-Deletion Correcting Codes and Burst-Deletion
Correcting Codes}

\author{\IEEEauthorblockN{Wentu~Song and Kui~Cai}
\IEEEauthorblockA{Science, Mathematics and Technology Cluster\\
Singapore University of Technology and Design, Singapore 487372\\
Email: \{wentu\_song, cai\_kui\}@sutd.edu.sg}}

\maketitle

\begin{abstract}
In this paper, we construct systematic $q$-ary two-deletion
correcting codes and burst-deletion correcting codes, where $q\geq
2$ is an even integer. For two-deletion codes, our construction
has redundancy $5\log n+O(\log q\log\log n)$ and has encoding
complexity near-linear in $n$, where $n$ is the length of the
message sequences. For burst-deletion codes, we first present a
construction of binary codes with redundancy $\log n+9\log\log
n+\gamma_t+o(\log\log n)$ bits $(\gamma_t$ is a constant that
depends only on $t)$ and capable of correcting a burst of at most
$t$ deletions, which improves the Lenz-Polyanskii Construction
(ISIT 2020). Then we give a construction of $q$-ary codes with
redundancy $\log n+(8\log q+9)\log\log n+o(\log q\log\log
n)+\gamma_t$ bits and capable of correcting a burst of at most $t$
deletions.
\end{abstract}

\section{Introduction}

DNA-based data storage has been a hot topic in information theory
society. As deletion/insertion are common in DNA data storage
\cite{Heckel20}, codes correcting such errors have attracted
significant attention in recent years.

It was proved in \cite{Levenshtein65} that the optimal redundancy
of binary $t$-deletion correcting codes is asymptotically between
$t\log n+o(\log n)$ and $2t\log n+o(\log n)$, where $n$ is the
length of the code and the redundancy of a binary code $\mathcal
C$ is defined as $n-\log|\mathcal C|$.\footnote{In this paper, for
any positive real number $x$, $\log_qx$ is the logarithm of $x$
with base $q$, where $q\geq 2$ is a positive integer. If the base
$q=2$, then for simplicity, we write $\log_2x=\log x$.} The
well-known Varshamov-Tenengolts (VT) codes \cite{Varshamov65},
which is defined as
\begin{align*}\text{VT}_a(n)\!=\!\left\{\!(c_1,\ldots,c_n)
\!\in\!\{0,1\}^n\!: \!\sum_{i=1}^nic_i\!\equiv a
~\text{mod}\!~(n+1)\!\right\}\!,\end{align*} is a class of binary
single-deletion correcting codes with asymptotically optimal
redundancy. 
Construction of multiple-deletion correcting codes with low
redundancy were considered in
\cite{Brakensiek18}$-\!\!$\cite{Wentu22}. By using the higher
order VT syndromes and the syndrome compression technique
\cite{Sima20-1}, Sima \emph{et al}. constructed a family of
systematic $t$-deletion correcting codes with $4t\log n+o(\log n)$
bits \cite{Sima20}. The method in \cite{Sima20} was improved in
\cite{Wentu22} to give a construction of $t$-deletion correcting
codes with redundancy $(4t-1)\log n+o(\log n)$, which is the best
known result in redundancy. For the special case of $t=2$, an
explicit construction of $2$-deletion correcting codes with
redundancy $4\log n+o(\log n)$ was proposed by Guruswami and H\aa
stad \cite{Gur2020}, which matches the existential upper bound of
the asymptotically optimal codes.

As a special case of deletion errors, a burst of $t$ deletions
$($or a $t$-burst-deletion$)$ refers to $t$ deletions that occur
at consecutive positions. It was proved in \cite{Schoeny2017} that
the redundancy of a $t$-burst-deletion-correcting code is
approximately lower bounded by $\log n+t-1$. Levenshtein
\cite{Levenshtein67} constructed a class of binary codes that can
correct a burst of at most two deletions with asymptotically
optimal redundancy of $\log n+1$. Binary codes capable of
correcting a burst of \emph{exact} $t$ deletions for $t\geq 2$ are
constructed in \cite{Schoeny2017}, which also have an
asymptotically optimal redundancy of $\log n+(t-1)\log\log
n+t-\log t$. In \cite{Lenz20}, binary codes capable of correcting
a burst of \emph{at most} $t$ deletions are constructed, which
also have an asymptotically optimal redundancy of $\log
n+(t(t-1)/2)\log\log n+\gamma_t$, where $\gamma_t$ is a constant
that depends only on
$t$.

Besides binary codes, nonbinary deletion correcting codes are also
investigated in the literature. In \cite{Levenshtein02}, it was
shown that the optimal redundancy of a $q$-ary $t$-deletion
correcting code is asymptotically lower bounded by $t\log n+t\log
q+o(\log q\log n)$ and upper bounded by $2t\log n+t\log q+o(\log
q\log n)$ in bits $(q\geq 2)$. A class of $q$-ary single-deletion
correcting codes with redundancy close to the asymptotic
optimality was constructed in \cite{Tenengolts84}. For $q$-ary
$t$-deletion correcting codes, the best known construction is
presented in \cite{Sima20-2}, which achieve optimal redundancy up
to a constant factor. Quaternary codes capable of correcting a
single edit error for DNA data storage were studied in
\cite{Cai19}. In \cite{Wang21}, a $q$-ary code that can correct a
burst of at most $2$ deletions with redundancy $\log n+O(\log q
\log\log n)$ bits was constructed, where $q\geq 2$ is an even
integer.

In this paper, we construct nonbinary two-deletion correcting
codes and burst-deletion correcting codes. Our contributions
includes:
\begin{itemize}
 \item[1)] We construct a class of systematic $q$-ary two-deletion
 correcting codes, with redundancy $5\log n+O(\log q\log\log n)$,
 where $q\geq 2$ is an even integer and $n$ is the length of the
 message sequences.
 \item[2)] We present a construction of binary codes with redundancy $\log
 n+9\log\log n+\gamma_t+o(\log\log n)$ bits $(\gamma_t$ is a
 constant that depends only on $t)$ and capable of correcting a
 burst of at most $t$ deletions, which improves the Lenz-Polyanskii
 Construction (ISIT 2020).
 \item[2)] We give a construction of $q$-ary codes with redundancy
 $\log n+(8\log q+9)\log\log n+o(\log q\log\log n)+\gamma_t$ bits
 and capable of correcting a burst of at most $t$ deletions,
 where $q\geq 2$ is an even integer.
\end{itemize}

Note that each symbol in $\mathbb Z_q$ can be viewed as a binary
string of length $\lceil\log q\rceil$, so a binary code of length
$\lceil\log q\rceil n$ and capable of correcting a burst of
$\lceil\log q\rceil t$ deletions can also be viewed as a $q$-ary
code of length $n$ and capable of correcting a burst of $t$
deletions. By this observation and by the construction in
\cite{Lenz20}, we can obtain a $q$-ary code of length $n$ and
capable of correcting a burst of $t$ deletions that has redundancy
$$\log(n\log q)+\frac{t\log q (t\log q+1)}{2}\log\log(n\log
q)+\gamma_t.$$ Our construction has improved redundancy than this
naive construction.

The rest of this paper is organized as follows. In Section
\uppercase\expandafter{\romannumeral 2}, we introduce some basic
concepts and notations of deletion correcting codes, and review
some related constructions in the literature. In Section
\uppercase\expandafter{\romannumeral 3}, we construct $q$-ary
two-deletion correcting codes. In Section
\uppercase\expandafter{\romannumeral 4}, we present an improved
construction of binary codes correcting a burst of at most $t$
deletions. In Section \uppercase\expandafter{\romannumeral 5}, we
construct of $q$-ary codes correcting a burst of at most $t$
deletions. The paper is concluded in Section
\uppercase\expandafter{\romannumeral 6}.

\section{Preliminaries}

For any integers $m$ and $n$ such that $m\leq n$, we denote
$[m,n]=\{m,m+1,\ldots,n\}$ and call it an \emph{interval}. If
$m>n$, let $[m,n]=\emptyset$. For simplicity, denote $[n]=[1,n]$
for any positive integer $n$. For any positive real number $x$,
$\log x$ is the logarithm of $x$ with base $2$, i.e., $\log
x=\log_2x$. The size (cardinality) of any set $S$ is denoted by
$|S|$. For any positive integer $q\geq 2$, denote $\mathbb
Z_q=\{0,1,2,\cdots,q-1\}$, which will be used as the alphabet of
$q$-ary codes.

For any string (also called a sequence) $\bm{x}\in\mathbb Z_q^n$,
$n$ is called the length of $\bm{x}$ and denote $|\bm x|=n$.
Unless otherwise specified, we use $x_i$ to denote the $i$th
coordinate of $\bm{x}$, where $i\in[n]$. Usually, we denote
$\bm{x}=(x_1,x_2,\ldots,x_n)$ or $\bm{x}=x_1x_2\cdots x_n$. For
any $I=\{i_1, i_2, \ldots, i_d\}\subseteq [n]$ such that
$i_1<i_2<\cdots<i_d$, denote $x_I=x_{i_1} x_{i_2} \cdots x_{i_d}$
and call $x_D$ a \emph{subsequence} of $\bm x$. If $I\subseteq[n]$
is an interval $($i.e., $I=[i,j]$ for some $i,j\in[1,n]$, $i\leq
j)$, then $x_{I}=x_{[i,j]}=x_{i}x_{i+1}\cdots x_{j}$ is called a
\emph{substring} of $\bm x$. In other words, a substring of
$\bm{x}$ is a subsequence of $\bm{x}$ consisting of some
consecutive symbols of $\bm{x}$. We say that $\bm x$ contains $\bm
p~($or $\bm p$ is contained in $\bm x)$ if $\bm{p}$ is a substring
of $\bm{x}$. For two substrings $x_{I}$ and $x_{I'}$ of $\bm x$,
where $I,I'\subseteq[n]$ are two intervals, we say that $x_{I}$
and $x_{I'}$ are \emph{disjoint} if $I\cap I'=\emptyset$.

Let $t\leq n$ be a nonnegative integer. For any $\bm{x}\in\mathbb
Z_q^n$, let $\mathcal D_t(\bm x)$ denote the set of subsequences
of $\bm x$ of length $n-t$, and let $\mathcal B_{t}(\bm x)$ denote
the set of subsequences $\bm y$ of $\bm x$ that can be obtained
from $\bm x$ by a burst of $t$ deletions, that is $\bm y=x_{I}$
such that $I=[n]\backslash D$ for some interval $D\subseteq[n]$ of
length $t~($i.e., $D=[i,i+t-1]$ for some $i\in[n-t+1])$. Moreover,
let $\mathcal B_{\leq t}(\bm x)=\bigcup_{t'=0}^t\mathcal
B_{t'}(\bm x)$ be the set of subsequences of $\bm x$ that can be
obtained from $\bm x$ by a burst of at most $t$ deletions.
Clearly, $\mathcal D_{ 1}(\bm x)=\mathcal B_{1}(\bm x)=\mathcal
B_{\leq 1}(\bm x)$. However, $\mathcal B_{t}(\bm x)\subseteq
\mathcal D_{t}(\bm x)\cap\mathcal B_{\leq t}(\bm x)$ for $t\geq
2$.

A code $\mathcal C\subseteq\mathbb Z_q^n$ is said to be a
$t$-\emph{deletion correcting code} if for any $\bm x\in\mathcal
C$ and any $\bm y\in\mathcal D_t(\bm x)$, $\bm x$ can be uniquely
recovered from $\bm y$; the code $\mathcal C\subseteq\mathbb
Z_q^n$ is said to be capable of \emph{correcting a burst of at
most $t$ deletions} if for any $\bm x\in\mathcal C$ and any $\bm
y\in\mathcal B_{\leq t}(\bm x)$, $\bm x$ can be uniquely recovered
from $\bm y$.

\subsection{Some Constructions Related to Binary Single-deletion
and Two-deletion Correcting Codes}

From the VT construction, we can obtain the following lemma about
single-deletion correcting codes.

\begin{lem}\label{VT-Code-Skch}
For any integer $n\geq 3$, there exists a function $\text{VT}:
\{0,1\}^n\rightarrow\{0,1\}^{\log n}$, computable in linear time,
such that for any $\bm{c}\in\{0,1\}^n$, given $\text{VT}(\bm c)$
and any $\bm b\in\mathcal D_1(\bm c)$, one can uniquely recover
$\bm c$.
\end{lem}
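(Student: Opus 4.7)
The plan is to realise $\text{VT}$ as the classical Varshamov--Tenengolts syndrome. Define
\[\text{VT}(\bm c) = \sum_{i=1}^n i\,c_i \pmod{n+1}\]
and encode the result as a binary string. The syndrome takes values in $\{0,1,\ldots,n\}$, so it is expressible in $\lceil\log(n+1)\rceil$ bits, which agrees with the $\log n$ bound in the statement up to the usual ceiling/asymptotic convention (if a strict bound is required, one can restrict to those $n$ with $n+1 \leq 2^{\lceil\log n\rceil}$, or absorb the extra bit into the $o(\log n)$ terms of later constructions that invoke this sketch). Computability in linear time is immediate, since evaluating the sum requires only $n$ additions modulo $n+1$ with coefficients bounded by $n$.

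For the recovery step I would invoke the Levenshtein $1965$ decoding procedure for $\text{VT}_a(n)$, which is already alluded to when the code $\text{VT}_a(n)$ is introduced. Given $\bm b \in \mathcal{D}_1(\bm c)$ of length $n-1$ and the target syndrome $a = \text{VT}(\bm c)$, first compute $a' = \sum_{j=1}^{n-1} j\, b_j \bmod (n+1)$ and set the deficit $\delta = (a - a') \bmod (n+1)$. Parse $\bm b$ into its maximal runs of $0$s and $1$s, and let $w_R$ denote the number of $1$s lying strictly to the right of any candidate insertion position. The key structural fact is that each pair (insertion position, inserted value) inside a given run produces the same string and contributes the same syndrome shift, while distinct runs yield distinct shifts; matching the observed $\delta$ with the count $w_R$ (for the $1$-case) or with $\delta$ directly (for the $0$-case) isolates a unique run and a unique inserted value $v \in \{0,1\}$. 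Inserting $v$ at the canonical location within that run yields $\bm c$.

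The only step that needs genuine care is the case analysis of the Levenshtein decoder, but this is entirely standard, and no new ideas are needed beyond the original VT argument. I would therefore present the lemma as a restatement of that classical result, with the single added observation that the syndrome's length as a bit string satisfies the claimed $\log n$ bound.
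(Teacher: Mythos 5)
Your proposal is correct and follows exactly the route the paper has in mind: the paper gives no explicit proof of this lemma, simply remarking that it follows from the VT construction, and your argument is the standard Varshamov--Tenengolts syndrome together with Levenshtein's single-deletion decoder. The only point worth noting, which you already address, is that the syndrome lives in $\{0,\ldots,n\}$ and hence strictly needs $\lceil\log(n+1)\rceil$ bits rather than $\log n$; this is a harmless convention in the paper and does not affect the downstream redundancy accounting.
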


The following lemma can be obtained from the results of
\cite{Sima19-1}, and so its proof is omitted.

\begin{lem}\label{Ind-Bnry-2del}
For any integer $n\geq 3$, there exists a function $\xi:
\{0,1\}^n\rightarrow\{0,1\}^{7\log n+o(\log n)}$, computable in
linear time, such that for any $\bm{c}\in\{0,1\}^n$, given
$\xi(\bm c)$ and any $\bm b\in\mathcal D_2(\bm c)$, one can
uniquely recover $\bm c$.
\end{lem}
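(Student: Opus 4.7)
The plan is to adapt the binary two-deletion code construction of Sima, Gabrys and Bruck [Sima19-1] into a hash/labeling function $\xi$. Starting from $\bm c\in\{0,1\}^n$, I would form a small collection of integer-valued syndromes: the first- and second-order VT sums $\alpha_1(\bm c)=\sum_{i=1}^n ic_i$ and $\alpha_2(\bm c)=\sum_{i=1}^n i^2c_i$, the Hamming weight $\sum_i c_i$, and analogous syndromes $\alpha_1(\bm v)$, $\alpha_2(\bm v)$ (and the weight of $\bm v$) of the transition-indicator vector $\bm v(\bm c)\in\{0,1\}^{n-1}$ defined by $v_i=1$ iff $c_i\neq c_{i+1}$. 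Each of these integers is then reduced modulo a carefully chosen value of size polynomial in $n$, and $\xi(\bm c)$ is defined as the concatenation of the resulting residues.

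The decoding argument then goes as follows. Given any $\bm b\in\mathcal D_2(\bm c)$, the decoder computes the same syndromes on $\bm b$ and subtracts them from the stored residues. The differences in $\alpha_1$ and $\alpha_2$ of $\bm c$ yield the elementary symmetric functions of the two deletion positions $i<j$, which would determine $i$ and $j$ uniquely \emph{except} for the standard ambiguity: deletions that slide inside a run of $\bm c$ produce the same received word. The indicator-vector syndromes $\alpha_1(\bm v)$, $\alpha_2(\bm v)$ are introduced precisely to break this ambiguity, because runs in $\bm c$ correspond to fixed locations of ones in $\bm v$, so the syndromes of $\bm v$ pin down which run(s) the deletions lie in. Combined with the Hamming-weight constraint, this determines both the positions and the deleted symbols, and hence $\bm c$.

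The main obstacle is the joint case analysis when both deletions fall inside a single run, or inside two adjacent runs of the same symbol — here $\bm b$ differs from $\bm c$ at only one or zero transitions, so the syndromes of $\bm v$ carry less information and one must argue that the $\alpha_j(\bm c)$-syndromes alone already resolve the ambiguity inside a common run. This is exactly the technical core of [Sima19-1]. Once correctness of decoding is established, I would apply the syndrome-compression technique of [Sima20-1] to each modulus so that the total bit-length of $\xi(\bm c)$ shrinks to $7\log n+o(\log n)$, and I would check that every syndrome above is computable by a linear scan of $\bm c$, so that $\xi$ itself is linear-time computable.
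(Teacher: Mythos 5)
The paper does not actually prove this lemma --- it simply cites \cite{Sima19-1} (Sima, Raviv, Bruck) and omits the proof, so your proposal is being compared against the source paper's argument. Your high-level direction is correct: Sima-Raviv-Bruck do attach a collection of VT-type moments to $\bm c$ and to an indicator/transition vector derived from $\bm c$, and the $7\log n+o(\log n)$ bound does come from a careful choice of moduli on those moments. In that sense, you have identified the right reference and the right overall shape of the construction.

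However, there is a concrete error at the heart of your decoding sketch that would sink the proof as written. You claim that the differences $\alpha_1(\bm c)-\alpha_1(\bm b)$ and $\alpha_2(\bm c)-\alpha_2(\bm b)$ yield the elementary symmetric functions of the two deletion positions. That is true for \emph{substitution} errors, not for \emph{deletions}. When positions $p_1<p_2$ are deleted, every remaining symbol to the right of a deleted position shifts left, so the syndrome difference also picks up terms involving the Hamming weights of the segments $c_{(p_1,p_2)}$ and $c_{(p_2,n]}$; e.g.\ $\alpha_1(\bm c)-\alpha_1(\bm b)=p_1c_{p_1}+p_2c_{p_2}+\mathrm{wt}(c_{(p_1,p_2)})+2\,\mathrm{wt}(c_{(p_2,n]})$. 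So the position information is entangled with the content of $\bm c$ and is not recovered by reading off symmetric functions; this is precisely why two-deletion correction is hard and why \cite{Sima19-1} spends most of its effort analyzing how deletions in $\bm c$ propagate to a bounded, structured error pattern on the indicator vector, and only then designs syndromes that isolate that pattern. Your ``standard ambiguity lives inside a run'' step is therefore not the only obstacle; the position-extraction step preceding it already fails. A correct write-up would need to carry out (or cite in detail) the indicator-vector error-propagation analysis rather than a symmetric-function argument. Separately, invoking the syndrome compression of \cite{Sima20-1} is anachronistic and unnecessary here: the $7\log n+o(\log n)$ bound is already obtained in \cite{Sima19-1} by choosing the moduli directly, not by post-hoc compression.
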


Lemma \ref{Ind-Bnry-2del} can be used to construct systematic
binary two-deletion correcting codes with redundancy not greater
than $7\log n+o(\log n)$. Another construction, which uses the
so-called regular strings and has lower redundancy, was proposed
in \cite{Gur2020}, but it is not systematic.

\begin{defn}[Regularity]\label{Regu} A binary string
$\bm{c}\in\{0,1\}^n$ is said to be \emph{regular} if each
(contiguous) sub-string of $\bm c$ of length at least $d\log n$
contains both $00$ and $11$.
\end{defn}

In Definition \ref{Regu}, $d$ is a constant that can be chosen
properly. In this paper, we will always choose $d=7$. The
following two lemmas are from \cite{Gur2020}.

\begin{lem}\cite[Lemma 11]{Gur2020}\label{Enc-Reg-Bnry}
There exist an integer $M\geq 2^{n-1}$ and a one-to-one mapping
$\text{RegEnc}: \{1,2,\cdots,M\}\rightarrow\{0,1\}^n$ such that
its image is contained in the set of regular strings. Moreover,
the function $\text{RegEnc}$ can be computed in near-linear time
with a polynomial size lookup table.
\end{lem}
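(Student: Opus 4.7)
The plan is two-step: first show by counting that the set $R_n$ of regular length-$n$ strings has size at least $2^{n-1}$, so that $M:=|R_n|$ is admissible; then realize an injective enumeration of $R_n$ by an enumerative coder driven by a polynomial-size precomputed table.

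For the counting step, set $\ell=\lceil 7\log n\rceil$. A string is irregular iff some length-$\ell$ substring avoids $00$ or avoids $11$. The number of length-$\ell$ binary strings that avoid $00$ satisfies a Fibonacci recurrence and is bounded above by $\phi^{\ell+2}$ with $\phi=(1+\sqrt 5)/2<2$. A union bound over the $n-\ell+1$ starting positions of the bad window, the two forbidden patterns, and the $2^{n-\ell}$ free completions gives
\begin{align*}
|\{0,1\}^n\setminus R_n|\;\leq\;2(n-\ell+1)\,\phi^{\ell+2}\,2^{n-\ell}\;=\;O\!\left(n\cdot(\phi/2)^{\ell}\cdot 2^n\right).
\end{align*}
Since $\log(\phi/2)<-0.3$, choosing $d=7$ yields a bound $O(2^n/n^{1.1})<2^{n-1}$ for every sufficiently large $n$, so $|R_n|\geq 2^{n-1}$; the finitely many small $n$ can be handled directly via the lookup table.

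For the efficient encoder, take as state at position $k$ the last $\ell-1$ output symbols, which is a sufficient statistic for deciding whether the upcoming length-$\ell$ window still contains both $00$ and $11$. Precompute by backward dynamic programming the number $N_k(\sigma)$ of regular length-$(n-k)$ extensions from state $\sigma$ at position $k$; since $2^{\ell-1}=O(n^7)$, this table has $\mathrm{poly}(n)$ entries of at most $n$ bits each and is populated in polynomial time. Given $i\in\{1,\ldots,M\}$, the encoder builds the output left to right: at each position it consults $N_{k+1}(\sigma 0)$, outputs $0$ if $i\leq N_{k+1}(\sigma 0)$ and otherwise outputs $1$ and subtracts $N_{k+1}(\sigma 0)$ from $i$, updating $\sigma$ in constant time.

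The main obstacle is reducing the per-step cost from $O(n)$-bit arithmetic (which would yield a quadratic, not near-linear, encoder) down to polylogarithmic. This is handled by the standard rescaling trick of enumerative coding: maintain the residual $i$ in fixed precision, renormalize whenever it shrinks below a threshold, and emit/consume a single block of $O(\log n)$ input bits per renormalization. The only price is a constant number of bits of lost redundancy, which is easily absorbed by the slack in the bound $M\geq 2^{n-1}$. Correctness of the encoder then reduces to verifying that the suffix of length $\ell-1$ is indeed a sufficient state, which is immediate because regularity is a window-local property.
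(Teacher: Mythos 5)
The paper does not prove this statement; it imports it verbatim as \cite[Lemma 11]{Gur2020}. So you are filling in a proof where the paper gives none, and the evaluation has to rest on your argument alone. Your counting step is correct: the number of length-$\ell$ windows avoiding $00$ is a Fibonacci number bounded by $\phi^{\ell+1}$, and the union bound $2(n-\ell+1)\phi^{\ell+2}2^{n-\ell}=O\!\left(n(\phi/2)^{\ell}2^{n}\right)$ with $\log(\phi/2)<-0.3$ and $\ell\approx 7\log n$ gives $O(2^{n}/n^{1.1})$, well under $2^{n-1}$. The state observation is also right: any unfinished length-$\ell$ window overlaps the emitted prefix in at most $\ell-1$ symbols, so the last $\ell-1$ bits are a sufficient statistic, and the DP table of sizes $N_k(\sigma)$ is polynomial.

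The gap is in the efficiency claim, which is the only nontrivial content of the lemma once the count is in hand. You invoke ``the standard rescaling trick of enumerative coding'' and assert it costs a constant number of bits, but you do not check that the trick is compatible with the requirements here. Arithmetic-coding renormalization maps a message to a \emph{codeword interval}; it is not a priori an injection on the input index set $\{1,\dots,M\}$, which is what $\text{RegEnc}$ must be. To salvage injectivity one has to replace the exact counts $N_k(\sigma)$ by rounded values $\tilde N_k(\sigma)$ that (i) have all significant bits inside an $O(\log n)$-bit window anchored near position $n-k$, (ii) satisfy the subadditivity invariant $\tilde N_k(\sigma)\le \tilde N_{k+1}(\sigma 0)+\tilde N_{k+1}(\sigma 1)$ so the greedy recursion $i\mapsto i-\tilde N_{k+1}(\sigma 0)$ stays well-defined and lexicographically monotone (hence injective), and (iii) are such that the subtraction's borrow never escapes the sliding window. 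With those arrangements the per-step work is indeed $O(\log n)$ and the cumulative precision loss is $o(1)$ bits (far inside the slack between $|R_n|=2^{n}(1-o(1))$ and $2^{n-1}$), but none of this follows from the sentence as written, and a referee would reasonably ask you to make the injectivity and window-locality explicit rather than appeal to a trick designed for a different (non-injective) setting.
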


\begin{lem}\cite[Theorem 7]{Gur2020}\label{Reg-Bnry-2-Del}
There is a function $\eta$, computable in linear time, that maps
$n$ bits to $4\log n+10\log\log n+O(1)$ bits such that for any
regular $\bm{c}\in\{0,1\}^n$, given $\eta(\bm c)$ and any $\bm
b\in\mathcal D_2(\bm c)$, one can uniquely recover $\bm c$.
\end{lem}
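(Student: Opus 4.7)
The plan is to leverage the regularity hypothesis to localize the deletions to a short window and then use Varshamov--Tenengolts-type moment syndromes to disambiguate them. Regularity forbids long runs of a single symbol and long periodic segments, which are precisely the configurations that create ambiguity between distinct pairs of deletion positions. So my strategy is to first use regularity to reduce the unknown to a window of length $O(\log n)$, and then use a constant number of global moment syndromes together with a small local hash to pin down the exact deletion pattern inside that window.

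For the construction of $\eta(\bm c)$, I would take (i) four higher-order Varshamov--Tenengolts-type syndromes $\sigma_k(\bm c)=\sum_{i=1}^n i^k c_i \bmod M_k$ for $k\in\{0,1,2,3\}$ with moduli $M_k=\Theta(n^k)$, contributing $4\log n+O(1)$ bits and accounting for the leading term; and (ii) an auxiliary local hash of $10\log\log n+O(1)$ bits that encodes bit-level information inside a window of length $O(\log n)$, indexed by the localized deletion region. For decoding, given $\bm b\in\mathcal D_2(\bm c)$, the decoder would first align $\bm b$ with $\bm c$ via the longest common prefix and longest common suffix; by the regularity of $\bm c$, the remaining middle window (which must contain both deletions) has length $O(\log n)$. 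Inside this window, the four moment equations cut out a low-degree system in the unknown positions $i_1<i_2$ whose solution set has polylogarithmic size, and the auxiliary hash is used to pick the correct candidate.

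The main obstacle is showing that the scheme actually resolves every ambiguity, i.e., that no two regular strings $\bm c\neq \bm c'$ can share both the same $\eta$-value and a common $\bm b\in\mathcal D_2(\bm c)\cap\mathcal D_2(\bm c')$. I would handle this by case analysis on the gap $i_2-i_1$: when the two deletions are far apart, the first- and second-order moments already separate them; when they are close, the third-order moment combined with the bit-level content of the localized window (captured by the auxiliary hash) must be shown to rule out collisions. At each step, regularity is invoked to bound the size of the candidate set and to forbid the pathological run/periodic configurations where several distinct deletion patterns yield the same $\bm b$. Finally, linear-time computability of $\eta$ is immediate since each $\sigma_k$ is a linear function of $\bm c$ and the local hash can be computed in a single left-to-right pass.
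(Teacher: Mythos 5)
This lemma is cited by the paper from Guruswami and H\aa stad (Theorem 7 of \cite{Gur2020}) and is not proved in the paper, so there is no in-paper argument to compare against; I will instead assess your sketch against what the cited construction actually does.

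Your high-level plan --- use regularity to restrict the ambiguity set and then resolve it with moment syndromes plus a small auxiliary hash --- is the right flavor, and regularity is indeed what Guruswami and H\aa stad exploit. However, there are two genuine gaps in the sketch as written. First, the bit-accounting for the syndromes does not add up: four moments $\sigma_k = \sum_i i^k c_i \bmod M_k$ with $M_k = \Theta(n^k)$ for $k=0,1,2,3$ cost roughly $0+\log n+2\log n+3\log n = 6\log n$ bits, not $4\log n+O(1)$; conversely, if you force all four moduli to be $\Theta(n)$ to hit $4\log n$, the higher moments lose the information you need. Getting to $4\log n$ is exactly the nontrivial part of the GH construction, and it is not achieved by naively stacking independent degree-$k$ syndromes. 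Second, the localization step is circular as stated: you propose to ``align $\bm b$ with $\bm c$ via the longest common prefix and longest common suffix,'' but the decoder does not have $\bm c$ --- recovering $\bm c$ is the goal. The actual construction must localize the deletions using only $\bm b$ and the sketch $\eta(\bm c)$, which requires structure in the sketch (e.g.\ a coarse ``chunk'' indicator computable from the regular structure) and is where part of the $\log\log n$ budget and some of the $\log n$ budget actually go. In short, regularity cannot localize the deletions by itself; it only guarantees that once the sketch has localized them, the residual window is short (length $O(\log n)$), and your sketch conflates these two roles.

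A corrected outline would need to (i) specify a sketch component that, together with $\bm b$ alone, pins the two deletions down to an $O(\log n)$-length window --- this is where regularity's bound on run and alternating-substring lengths is used, in the spirit of Lemma~\ref{lem-2del-pstn} of this paper, but one must argue this from $\bm b$ rather than from $\bm c$ --- and (ii) redesign the syndrome part so that its total size is $4\log n+O(1)$ while still separating any two candidate deletion pairs once they are confined to the localized window. As it stands, the proposal identifies the right ingredients but neither the $4\log n$ bound nor the localization mechanism is actually established.
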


\subsection{Some Constructions Related to Binary Burst-Deletion Correcting
Codes}

The following lemma can be obtained from the results in Section IV
of \cite{Sima20-1}.

\begin{lem}\label{lem-Bnry-burst-Sima}
Suppose $t$ is a constant with respect to $n$. There is a function
$\phi:\{0,1\}^n\rightarrow\{0,1\}^{4\log n+o(\log n)}$, computable
in time $O(2^tn^3)$, such that for any $\bm c\in\{0,1\}^n$, given
$\phi(\bm c)$ and any $\bm b\in\mathcal B_{\leq t}(\bm c)$, one
can uniquely recover $\bm c$.
\end{lem}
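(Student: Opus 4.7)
The plan is to adapt the syndrome-compression framework developed in Section IV of \cite{Sima20-1}, whose original target is unrestricted $t$-deletion correction, to the easier burst-of-at-most-$t$ setting. Restricting the deletion pattern to an interval of length at most $t$ both exposes additional algebraic structure to the decoder (because the error is localized) and drastically reduces the number of confusable pairs, so one should expect a sharper redundancy constant than in the unrestricted case.

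Concretely, I would proceed in three steps. First, I would fix a short list of ``base'' syndromes $S_1,\ldots,S_k$ on $\bm c$ --- higher-order moments such as $\sum_{i=1}^n i^j c_i$ modulo $\text{poly}(n)$, together with VT-type syndromes of derived sequences such as run-length sequences --- chosen so that the joint tuple $\bigl(S_1(\bm c),\ldots,S_k(\bm c)\bigr)$ uniquely identifies $\bm c$ among all inputs compatible with a given $\bm b\in\mathcal{B}_{\leq t}(\bm c)$. For burst errors a small constant $k$ suffices, since the first moment essentially pins down the burst location while the higher moments recover the deleted bits. Second, I would apply the syndrome-compression transformation: instead of transmitting each $S_j(\bm c)$ in full, the encoder transmits $S_j(\bm c)\bmod p_j$ for a prime $p_j$ of size $\text{polylog}(n)$, together with a short descriptor of $p_j$. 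The descriptor is chosen by a pigeonhole argument that relies on the bound $\bigl|\{(\bm c,\bm c')\colon \mathcal{B}_{\leq t}(\bm c)\cap \mathcal{B}_{\leq t}(\bm c')\neq\emptyset\}\bigr|=\text{poly}(n)$, which holds because once $\bm b$ is fixed, the preimage $\bm c$ is determined by a burst position ($\leq n$ choices) and a burst content ($\leq 2^t$ choices). A careful accounting then shows that each compressed syndrome contributes $\log n + o(\log n)$ redundancy bits (the $\log n$ coming from the descriptor, the $o(\log n)$ from the residue), so four well-chosen base syndromes yield the claimed bound $4\log n + o(\log n)$. The $O(2^t n^3)$ running time reflects the cost of enumerating all $\text{poly}(n)$ candidate preimages when selecting valid descriptors.

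The main obstacle I anticipate is ruling out the subtle failure mode where two candidate codewords $\bm c, \bm c'$ produce the same $\bm b$ through bursts of \emph{different} lengths $t',t''\leq t$, since the exactly-$t$ analysis of \cite{Sima20-1} does not directly cover this case. Handling it requires a short case analysis showing that the two bursts still differ inside a common window of size at most $2t$, so the same base syndromes separate $\bm c$ from $\bm c'$ by essentially the same algebraic argument, at the cost only of slightly larger constants absorbed into the $o(\log n)$ term.
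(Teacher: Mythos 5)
The paper does not prove this lemma; it explicitly states that it ``can be obtained from the results in Section IV of \cite{Sima20-1}'' and omits the proof, so your proposal must be judged on whether it correctly reconstructs the cited syndrome-compression argument. Your high-level plan --- build a base labeling that separates burst-confusable pairs, exploit the $\text{poly}(n)$-sized preimage set, and compress --- is the right route, and your preimage count (burst start $\leq n$, burst content $\leq 2^t$) is the essential observation. Also, your flagged ``main obstacle'' is vacuous: for equal-length $\bm c,\bm c'$, any common $\bm b\in\mathcal B_{\leq t}(\bm c)\cap\mathcal B_{\leq t}(\bm c')$ satisfies $|\bm c|-|\bm b|=|\bm c'|-|\bm b|$, so both bursts automatically have the same length and no cross-length analysis is needed.

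The genuine gap is in your redundancy accounting, which misstates the syndrome-compression mechanism and would not yield the claimed bound. You propose compressing four syndromes separately, each modulo a prime $p_j$ of size $\text{polylog}(n)$, and claim each contributes $\log n + o(\log n)$ bits (a $\log n$-bit descriptor plus an $o(\log n)$-bit residue). A $\text{polylog}(n)$-sized prime cannot separate a $\Theta(n^2)$-sized confusable set: each nonzero syndrome difference has up to $O(N/\log p)$ prime factors of magnitude near $p$, so ruling out all of them over $\Theta(n^2)$ confusable codewords forces $p = \Theta(n^2\log n)$, not $\text{polylog}(n)$; with a $\log n$-bit descriptor the residue would then also cost $\Theta(\log n)$, not $o(\log n)$. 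In the actual scheme one compresses a \emph{single} labeling $f:\{0,1\}^n\to\{0,1\}^N$ with $N=O(\log n)$ once: letting $R$ bound the number of $\bm c'$ confusable with $\bm c$, one finds a prime $p=O(RN\log(RN))$, hence $\log p=\log R + o(\log R)$, such that $f(\bm c)\bmod p$ still separates all confusable pairs, and transmits $(p,\,f(\bm c)\bmod p)$ at a cost of $2\log R + o(\log R)$ bits. With $R=O(t\,2^t n^2)$ and $t$ constant, this gives $4\log n+o(\log n)$. The factor $4$ is therefore $2\times 2$ --- prime plus residue, each of length $\log R\approx 2\log n$ --- and has nothing to do with choosing four base syndromes each costing $\log n$.
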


Let $m\leq\delta\leq n$ be positive integers and $\bm
p\in\{0,1\}^m$, where $\bm p$ is called a \emph{pattern}. A string
$\bm c\in\{0,1\}^n$ is called $(\bm p, \delta)$-\emph{dense}, if
each substring of $\bm c$ of length $\delta$ contains at least one
pattern $\bm p$.

As in \cite{Lenz20}, in this paper, we take
$$\delta=t2^{t+1}\log n\footnotemark{}$$ and
$$\bm p=0^t1^t,$$ where $0^t$ is the string consists of $t$ symbol $0$s, and
$1^t$ is the string consists of $t$ symbol $1$s. In other words,
$\bm p=p_1p_2\cdots p_{2t}$ such that $p_1=p_2=\cdots=p_t=0$ and
$p_{t+1}=p_{t+2}=\cdots=p_{2t}=1$. It was proven in \cite{Lenz20}
that one bit of redundancy is sufficient to construct $(\bm p,
\delta)$-dense string. \footnotetext{In \cite{Lenz20}, $\delta$ is
taken to be $t2^{t+1}\lceil\log n\rceil\footnotemark[2]$. In this
paper, for notational simplicity, we omit the ceiling function and
write $\delta=t2^{t+1}\log n$.}

\begin{lem}\cite[Lemma 1]{Lenz20}\label{lem-p-dense}
For any $n\geq 5$, the number of $(\bm p, \delta)$-dense strings
of length $n$ is at least $$2^n(1-n^{1-\log e})\geq 2^{n-1}.$$
\end{lem}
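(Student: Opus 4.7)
The plan is to use complementary counting and a union bound. I would upper-bound the number of length-$n$ binary strings that are \emph{not} $(\bm p, \delta)$-dense by $2^n \cdot n^{1-\log e}$; the claimed lower bound on the number of dense strings then follows by subtraction from $2^n$. For each starting index $i \in [1, n-\delta+1]$, let $T_i$ denote the set of $\bm c \in \{0,1\}^n$ whose length-$\delta$ substring $c_{[i, i+\delta-1]}$ contains no occurrence of $\bm p = 0^t 1^t$. A string fails to be $(\bm p, \delta)$-dense iff it belongs to $\bigcup_{i} T_i$, so by the union bound the total count of non-dense strings is at most $\sum_i |T_i| \leq n \cdot \max_i |T_i|$.

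To bound $|T_i|$ for fixed $i$, observe that the $n-\delta$ coordinates outside the window are unconstrained, so $|T_i| = 2^{n-\delta} \cdot N_\delta$, where $N_\delta$ counts the length-$\delta$ binary strings avoiding $0^t 1^t$ as a substring. I would estimate $N_\delta$ by partitioning the length-$\delta$ window into $\delta/(2t) = 2^t \log n$ disjoint consecutive blocks of length $2t$: a string avoiding $0^t 1^t$ must in particular differ from $0^t 1^t$ on each such block, and since the blocks are disjoint and the bits are uniform and independent, this yields $N_\delta \leq 2^\delta \cdot (1 - 4^{-t})^{2^t \log n}$.

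Combining these estimates and applying $1 - x \leq e^{-x}$ together with the identity $e^{-\log n} = n^{-\log e}$ (which follows from $\log e = 1/\ln 2$), the count of non-dense strings simplifies to the target bound $2^n \cdot n^{1-\log e}$. The second inequality $2^n(1 - n^{1-\log e}) \geq 2^{n-1}$ is then a one-line numerical check: it reduces to $n^{1-\log e} \leq 1/2$, which, since $\log e > 1$, holds for every $n \geq 5$ by direct evaluation of $5^{\log e - 1} \geq 2$.

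The main technical point is the bookkeeping in the last step: verifying that the specific choice $\delta = t \cdot 2^{t+1} \log n$ together with the per-block success probability $4^{-t}$ and the factor $n$ from the union bound combine to give precisely the clean exponent $\log e$ in the final bound rather than a $t$-dependent constant. If the disjoint-block estimate alone turns out to be too crude, one can instead sharpen the count $N_\delta$ via a more refined enumeration of strings avoiding $0^t 1^t$ (for instance via a transfer-matrix or Goulden--Jackson-type argument), but the overall strategy --- complementary counting, union bound over window positions, and independence across disjoint blocks --- remains the same.
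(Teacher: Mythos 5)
Your overall strategy (complementary counting, a union bound over starting positions of the bad window, a disjoint-block estimate within each window, and the exponential inequality $1-x\le e^{-x}$) is the natural one and is essentially the route taken in \cite{Lenz20}; the paper itself only cites the lemma and gives no proof. However, the "bookkeeping in the last step" that you flag as the main technical point does not close, and this is exactly where the proposal fails. With $\delta = t\,2^{t+1}\log n$ there are $\delta/(2t)=2^t\log n$ disjoint blocks of length $2t$, so your estimate reads
$$N_\delta/2^\delta \;\le\; (1-4^{-t})^{2^t\log n} \;\le\; e^{-4^{-t}\cdot 2^t\log n} \;=\; e^{-2^{-t}\log n} \;=\; n^{-2^{-t}\log e}.$$
After multiplying by the union-bound factor of at most $n$ window positions, the number of non-dense strings is bounded by $2^n\,n^{1-2^{-t}\log e}$, not $2^n\,n^{1-\log e}$. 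Since $2^{-t}\log e<1$ for every $t\ge 1$, the exponent $1-2^{-t}\log e$ is strictly \emph{positive}, so this bound exceeds $2^n$ and is vacuous; the $t$-dependent factor does not cancel out as you hoped.

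The refinement you float (a transfer-matrix or Goulden--Jackson count of $N_\delta$, which is clean here because $0^t1^t$ has no nontrivial self-overlap) tightens the per-window avoidance probability to roughly $(1-4^{-t})^{\delta}$ and hence the exponent to about $1-t\,2^{1-t}\log e$; this equals $1-\log e$ for $t=1,2$ but is again worse for $t\ge 3$. The lesson is that the exponent in this lemma is tightly coupled to the normalization of $\delta$, and with the value $\delta=t\,2^{t+1}\log n$ given in the paper the disjoint-block argument cannot recover $n^{1-\log e}$ for all $t$. If instead $\delta=t\,2^{2t+1}\log n$, then $\delta/(2t)=4^t\log n$ and your computation gives $(1-4^{-t})^{4^t\log n}\le e^{-\log n}=n^{-\log e}$ on the nose, after which the union bound yields exactly $2^n n^{1-\log e}$. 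So the gap is concrete: the exponent arithmetic must be checked rather than deferred, and with the paper's stated $\delta$ the argument as written does not go through; either $\delta$ must be larger or a substantially sharper bound on $N_\delta$ is needed.
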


The following lemma can be obtained from Construction 1 and Lemma
2 of \cite{Lenz20} and so its proof is omitted.

\begin{lem}\label{lem-Bnry-burst-Lenz}
For any positive integer $n$, there is a function $\mu$,
computable in linear time, that maps $n$ bits to $\log n+3$ bits
such that for any $(\bm p, \delta)$-dense $\bm c\in\{0,1\}^n$,
given $\mu(\bm c)$ and any $\bm b\in\mathcal B_{\leq t}(\bm c)$,
one can find in time $O(n)$ an interval $L\subseteq[n]$ of length
at most $\delta+t$ such that $\bm b=c_{[n]\backslash D}$ for some
interval $D\subseteq L~($i.e., the deletions are located in the
interval $L)$.
\end{lem}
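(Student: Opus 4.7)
My plan is to follow the Lenz--Polyanskii strategy: hash the positions of the pattern $\bm{p}$ occurrences in $\bm{c}$ by a VT-like linear form modulo $2n$, augmented by the pattern count modulo $4$. For $\bm{x}\in\{0,1\}^n$ and $i\in[1,n-2t+1]$, define the indicator $\chi_i(\bm{x})=1$ if $x_{[i,i+2t-1]}=\bm{p}$ and $0$ otherwise, and set $N(\bm{x})=\sum_i\chi_i(\bm{x})$. I would then define
$$\mu(\bm{c})=\Big(\sum_{i=1}^{n-2t+1}i\cdot\chi_i(\bm{c})\bmod 2n,\;N(\bm{c})\bmod 4\Big),$$
which uses $(\log n+1)+2=\log n+3$ bits and is computed by a single linear scan of $\bm{c}$.

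Given $\mu(\bm{c})$ and $\bm{b}\in\mathcal{B}_{\leq t}(\bm{c})$, I would compute $\mu(\bm{b})$ analogously and write $\bm{b}=c_{[n]\setminus D}$ with $D=[\ell,\ell+k-1]$ for some $0\leq k\leq t$. Every pattern occurrence of $\bm{c}$ at a position $i\leq\ell-2t$ is preserved unchanged at position $i$ in $\bm{b}$, and every occurrence at $i\geq\ell+k$ persists at position $i-k$ in $\bm{b}$; the only patterns that can be destroyed or created lie in the window $W=[\ell-2t+1,\ell+k-1]$ of $\bm{c}$ together with its counterpart in $\bm{b}$. This yields the key identity
$$\sum_j j\cdot\chi_j(\bm{b})-\sum_i i\cdot\chi_i(\bm{c})\equiv -k\cdot R(\ell)+\varepsilon\pmod{2n},$$
where $R(\ell)=\sum_{i\geq\ell+k}\chi_i(\bm{c})$ is the number of pattern occurrences to the right of the burst and $\varepsilon$ is a bounded correction coming from the $O(1)$ patterns inside $W$. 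A parallel identity with a bounded residue holds for $N(\bm{b})-N(\bm{c})\bmod 4$.

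The $(\bm{p},\delta)$-density of $\bm{c}$ forces consecutive pattern positions to be at most $\delta$ apart, so $R(\ell)$ is a monotone step function of $\ell$ whose value determines $\ell$ up to an interval of length at most $\delta$; combined with $k\leq t$, this pins down a containing interval $L$ of length $\delta+t$ with $D\subseteq L$. I would use the two count bits to disambiguate $\varepsilon$, after which $R(\ell)$, and hence $L$, is read off from the modular difference $\mu(\bm{c})-\mu(\bm{b})\bmod 2n$. The decoder is a single pass through $\bm{b}$ followed by arithmetic, so the runtime is $O(n)$.

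The main obstacle will be controlling the correction $\varepsilon$: because several pattern occurrences can cluster near the burst site, one must bound uniformly (in $t$) how many patterns can be destroyed or created and check that the two auxiliary bits of $N\bmod 4$ really suffice to eliminate the residual ambiguity. This combinatorial bookkeeping is the substance of Construction~1 and Lemma~2 of \cite{Lenz20}, and the remainder of the argument is a routine adaptation of their analysis to our indexing.
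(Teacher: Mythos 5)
The paper does not actually give a proof of this lemma; it states it follows from Construction~1 and Lemma~2 of \cite{Lenz20} and omits the argument, so there is nothing in the paper to compare your proposal against line by line. Evaluating your proposal on its own terms, there is a real gap. You define $\varepsilon$ as the residual in
$$\sum_j j\cdot\chi_j(\bm{b})-\sum_i i\cdot\chi_i(\bm{c})\equiv -k\cdot R(\ell)+\varepsilon\pmod{2n},$$
and assert $\varepsilon$ is ``a bounded correction coming from the $O(1)$ patterns inside $W$.'' That is not true. Writing out the telescoping gives
$$\varepsilon=\sum_{j\in[\ell-2t+1,\ell-1]}j\cdot\chi_j(\bm{b})-\sum_{i\in W}i\cdot\chi_i(\bm{c}),$$
and while the number of contributing terms is $O(1)$, each term is a \emph{position} $\approx\ell$, so $\varepsilon=(d-d')\ell+O(t)$ where $d$, $d'$ are the numbers of patterns created near the junction in $\bm b$ and destroyed near the burst in $\bm c$. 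Concretely, take $t=2$, $\bm p=0011$, $\bm c$ containing $00110011$ at positions $[\ell-3,\ell+4]$, and delete $c_\ell c_{\ell+1}=10$: two patterns are destroyed and none created, so $\varepsilon\approx -2\ell$, which is linear in $n$, not bounded.

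Knowing $N(\bm c)\bmod 4$ does let you recover $d-d'$ (since $|d-d'|\leq 2$), but that does not rescue the recipe ``disambiguate $\varepsilon$, then read off $R(\ell)$'': after substituting $\varepsilon=(d-d')\ell+O(t)$ one is left with the equation $k\,R(\ell)-(d-d')\ell\approx\mu(\bm c)-\mu(\bm b)\pmod{2n}$, and the left-hand side is a function of the unknown $\ell$ that is \emph{not} monotone when $d-d'<0$ and $k\geq 2$ (between markers it changes by $-(d-d')$ per step, at a marker it jumps by $k$ in the opposite direction). Even in the case $d=d'$ where $\varepsilon$ really is $O(t)$, the uncertainty in $R(\ell)$ is $O(t/k)$, which for small $k$ translates into a position window of size $O(t\delta/k)$, exceeding the required $\delta+t$. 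So the decoding step as you describe it does not localize the burst to an interval of the required length. You flag ``controlling $\varepsilon$'' as the main obstacle, but describe it as routine bookkeeping; in fact it is precisely the point where your hash $\sum_i i\chi_i(\bm c)$ breaks, and the construction in \cite{Lenz20} uses a different (block-indicator/shifted-VT) quantity for which the corresponding residual genuinely stays bounded. As written, the proposal does not constitute a proof of the lemma.
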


\subsection{Matrix Representation of $q$-ary Strings}

In the rest of this paper, we always assume $q>2$ is a fixed even
integer. As in \cite{Sima20-2}, each $q$-ary string
$\bm{x}=x_1x_2\ldots x_n\in\mathbb Z_q^{n}$ can be represented by
a $\lceil\log q\rceil\times n$ binary matrix
\begin{align}\label{q-B-Repr}
M_{\bm{x}}=(c_{i,j})=\left(\begin{array}{cccc}
c_{1,1} & \cdots & c_{1,n} \\
\vdots & \ddots & \vdots \\
c_{\lceil\log q\rceil,1} &
\cdots & c_{\lceil\log q\rceil,n} \\
\end{array}\right),
\end{align}
where $c_{i,j}\in\{0,1\}$, such that the $j$th column of
$M_{\bm{x}}$ is the binary representation of $x_j$. Specifically,
$x_j=\sum_{i=1}^{\lceil\log q\rceil}c_{i,j}2^{i-1}$. We call
$M_{\bm{x}}$ the \emph{matrix representation} of $\bm x$. For any
$i\in\{1,2,\cdots,\lceil\log q\rceil\}$ and any interval
$J=[j_1,j_2]=\{j_1,j_1+1,\cdots,j_2\}\subseteq[n]$, where $1\leq
j_1<j_2\leq n$, denote
\begin{align}\label{1-row-Repr}
c_{i,J}\triangleq c_{i,j_1}c_{i,j_1+1}\cdots c_{i,j_2},\end{align}
which is a substring of the $i$th row of $M_{\bm x}$ consisting of
$c_{i,j_1}$, $c_{i,j_1+1}$, $\cdots$, $c_{i,j_2}$. In particular,
$c_{i,[n]}$ is the $i$th row of $M_{\bm x}$.

Clearly, if $\bm y\in\mathbb Z_q^{n-t}$ is obtained from $\bm x$
by deleting $x_{j_1},\cdots,x_{j_t}$, then the matrix
representation $M_{\bm{y}}$ of $\bm y$ can be obtained from
$M_{\bm{x}}$ by deleting columns $j_1, \cdots,j_t$ of
$M_{\bm{x}}$. Moreover, $\bm x$ can be recovered from $\bm y$ if
and only if its matrix representation $M_{\bm{x}}$ can be
recovered from $M_{\bm{y}}$.

\begin{lem}\label{lem-En-B2Q}
Suppose $\mathcal E_{0}:\{0,1\}^{n-1}\rightarrow\{0,1\}^{n}$ is a
one-to-one mapping and $q>2$ is an even integer. Then there is a
one-to-one mapping $\bar{\mathcal E}_{0}:\mathbb
Z_q^{n-1}\rightarrow\mathbb Z_q^{n}$, with the same computing time
as $\mathcal E_{0}$, such that for any $\bm u\in\mathbb Z_q^{n-1}$
and $\bm x=\bar{\mathcal E}_{0}(\bm u)$, if
$M_{\bm{u}}=(b_{i,j})_{\lceil\log q\rceil\times(n-1)}$ and
$M_{\bm{x}}=(c_{i,j})_{\lceil\log q\rceil\times n}$ are the matrix
representation of $\bm u$ and $\bm x$ respectively, then
$$c_{1,[n]}=\mathcal E_{0}(b_{1,[n-1]}).$$
\end{lem}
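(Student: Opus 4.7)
The plan is to construct $\bar{\mathcal E}_0$ on the matrix representation, using $\mathcal E_0$ to dictate the first (least-significant) row and extending the other rows in a trivial invertible way. Given $\bm u\in\mathbb Z_q^{n-1}$ with matrix representation $M_{\bm u}=(b_{i,j})_{\lceil\log q\rceil\times(n-1)}$, I would set $c_{1,[n]}=\mathcal E_0(b_{1,[n-1]})$, and for each $i\in\{2,\ldots,\lceil\log q\rceil\}$, set $c_{i,[n-1]}=b_{i,[n-1]}$ together with $c_{i,n}=0$. The required identity $c_{1,[n]}=\mathcal E_0(b_{1,[n-1]})$ then holds by construction, and the extra work beyond one call to $\mathcal E_0$ is only $O(\log q\cdot n)$ bit-copies, so the total computing time matches that of $\mathcal E_0$.

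Two nontrivial properties remain to check: that $\bm x$ really lies in $\mathbb Z_q^n$ (i.e., every column of $M_{\bm x}$ encodes a value in $\{0,1,\ldots,q-1\}$), and that $\bar{\mathcal E}_0$ is injective. Injectivity is easy to argue: the rows $i\geq 2$ of $M_{\bm u}$ are recovered by stripping the trailing $0$ from the corresponding rows of $M_{\bm x}$, while $b_{1,[n-1]}$ is recovered from $c_{1,[n]}$ by inverting $\mathcal E_0$, which is possible since $\mathcal E_0$ is one-to-one on $\{0,1\}^{n-1}$.

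The main obstacle, and the point where the hypothesis that $q$ is even is used, is showing that overwriting the least-significant row of a valid $q$-ary matrix by an arbitrary binary string keeps each column valid. For each $j\in[n-1]$, let $v_j=\sum_{i=2}^{\lceil\log q\rceil}b_{i,j}2^{i-1}$; this quantity is even, and the hypothesis $\bm u\in\mathbb Z_q^{n-1}$ gives $u_j=b_{1,j}+v_j\leq q-1$. Since $q$ is even, $q-1$ is odd while $v_j$ is even, so this forces $v_j\leq q-2$. Consequently, for \emph{either} choice $c_{1,j}\in\{0,1\}$ we have $c_{1,j}+v_j\leq q-1$, meaning each of the first $n-1$ columns of $M_{\bm x}$ still encodes a valid $\mathbb Z_q$ symbol regardless of what $\mathcal E_0$ produces. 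For the newly appended column $n$, all higher rows are $0$ by construction, so the column value is simply $c_{1,n}\in\{0,1\}\subseteq\mathbb Z_q$. Combining these observations completes the argument.
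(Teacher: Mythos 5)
Your proposal is correct and uses the same construction as the paper: copy rows $2,\ldots,\lceil\log q\rceil$ of $M_{\bm u}$ into the first $n-1$ columns of $M_{\bm x}$, zero-pad column $n$ in those rows, and let $\mathcal E_0$ dictate row $1$. The only difference is cosmetic: you verify column validity by directly observing that $v_j=\sum_{i\geq 2}b_{i,j}2^{i-1}$ is even and hence $\leq q-2$, whereas the paper splits into the cases $b_{1,j}=0$ and $b_{1,j}=1$; both reduce to the same parity fact about even $q$.
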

\begin{proof}
For each $\bm u\in\mathbb Z_q^{n-1}$, where the matrix
representation of $\bm u$ is
\begin{align*}
M_{\bm{u}}=\left(\begin{array}{ccccc}
b_{1,1} & \cdots & b_{1,n-1} \\
b_{2,1} & \cdots & b_{2,n-1} \\
\vdots & \ddots & \vdots \\
b_{\lceil\log q\rceil,1} &
\cdots & b_{\lceil\log q\rceil,n-1} \\
\end{array}\right),
\end{align*}
denote $\mathcal E_{0}(b_{1,[n-1]})=\bm c=c_{1,1}\cdots
c_{1,n-1}c_{1,n}$ and let
\begin{align*}
M=\left(\begin{array}{ccccc}
c_{1,1} & \cdots & c_{1,n-1} & c_{1,n} \\
b_{2,1} & \cdots & b_{2,n-1} & 0 \\
\vdots & \ddots & \vdots & \vdots \\
b_{\lceil\log q\rceil,1} &
\cdots & b_{\lceil\log q\rceil,n-1} & 0 \\
\end{array}\right).
\end{align*}
Specifically, $M=(c_{i,j})$ is a $\lceil\log q\rceil\times n$
binary matrix satisfying the following three properties: i) the
first row of $M$ is equal to $\bm c$; ii) $c_{2,j}\cdots
c_{\lceil\log q\rceil,j}=b_{2,j}\cdots b_{\lceil\log q\rceil,j}$
for each $j\in[n-1]$; iii) $c_{2,n}\cdots c_{\lceil\log
q\rceil,n}=0^{\lceil\log q\rceil-1}$, where $0^{\lceil\log
q\rceil-1}$ is the string consisting of $\lceil\log q\rceil-1$
symbol $0$s.

Let $\bar{\mathcal E}_{0}(\bm u)=\bm x$ such that the matrix
representation of $\bm x$ is $M_{\bm{x}}=M$. It is easy to see
that $c_{1,[n]}=\mathcal E_{0}(b_{1,[n-1]})$ and the computing
time of $\bar{\mathcal E}_{0}$ is the same as that of $\mathcal
E_{0}$. Moreover, since $\mathcal E_{0}$ is a one-to-one mapping,
it is also easy to see that $\bar{\mathcal E}_{0}$ is a one-to-one
mapping.

It remains to prove that $\bm x\in\mathbb Z_q^{n}$, equivalently,
each column of $M_{\bm{x}}$ is the binary representation of some
integer in $\mathbb Z_q$.

According to property iii) of the constructed matrix $M$, we have
$c_{\lceil\log q\rceil,n}\cdots c_{2,n}c_{1,n}=0^{\lceil\log
q\rceil-1}c_{1,n}$, so the last column of $M$ is the binary
representation of $c_{1,n}\in\{0,1\}\subseteq\mathbb Z_q$. For
each $j\in[n-1]$, according to property ii) of $M$, we have
$c_{\lceil\log q\rceil,j}\cdots c_{2,j}c_{1,j}=b_{\lceil\log
q\rceil,j}\cdots b_{2,j}c_{1,j}$, so $x_j=\sum_{i=1}^{\lceil\log
q\rceil}c_{i,j}2^{i-1}=\sum_{i=2}^{\lceil\log
q\rceil}b_{i,j}2^{i-1}+c_{1,j}=u_{j}-b_{1,j}+c_{1,j}$, where the
last equality holds because according to the definition of the
matrix representation, $b_{\lceil\log q\rceil,j}\cdots
b_{2,j}b_{1,j}$ is the binary representation of $u_j$. If
$b_{1,j}=1$, then $x_j=u_{j}-b_{1,j}+c_{1,j}\leq u_j\leq q-1$. If
$b_{1,j}=0$, then $u_{j}$ is even. Noticing that $q$ is even, so
$u_{j}\leq q-2$, and hence
$x_j=u_{j}-b_{1,j}+c_{1,j}=u_{j}+c_{1,j}\leq q-1$. In both cases,
we have $x_j\in\mathbb Z_q$. Thus, each column of $M$ is the
binary representation of some integer in $\mathbb Z_q$, and so
$\bm x\in\mathbb Z_q^{n}$.
\end{proof}

\section{Nonbinary Two-deletion Correcting Codes}

In this section, we consider $q$-ary two-deletion correcting
codes. We assume that $q>2$ is an even integer and is a constant
with respect to the code length $n$. Each binary sequence $\bm a$
will also be viewed as a non-negative integer whose binary
representation is $\bm a$, and conversely, each non-negative
integer $m$ will also be viewed as a binary sequence with length
$\lceil\log (m+1)\rceil$, i.e., the binary representation of $m$.
Therefore, summation and multiplication of binary strings and
integers are performed in the set of integers.

We need to introduce some concepts and notations for binary
strings, which will be used in our construction.

Let $\bm c\in\{0,1\}^n$ be a binary string of length $n$. A
\emph{run} of $\bm c$ is a maximal substring of $\bm c$ consisting
of identical symbols.\footnote{We say that a substring of $\bm c$
satisfying a certain property is maximal if it is contained by no
other substring of $\bm c$ that satisfies the same property.
Hence, a maximal run of the string $\bm c$ is not contained by any
other run of $\bm c$.} A substring $c_{[i_1,i_2]}$ of $\bm c$,
where $i_1<i_2$, is called an \emph{alternative substring} of $\bm
c$ if $c_{i+1}\neq c_{i}$ for all $i\in[i_1,i_2-1]$.

\begin{rem}\label{rem-Regu-length}
From Definition \ref{Regu}, it is easy to see that if
$\bm{c}\in\{0,1\}^n$ is regular, then each substring of $\bm c$ of
length $d\log n$ can not be a run or an alternative substring of
$\bm c$ because it contains both $00$ and $11$. Equivalently, each
run and each alternative substring of $\bm c$ have length at most
$d\log n$.
\end{rem}

\begin{defn}\label{def-run-set}
For each $\bm c\in\{0,1\}^n$, let $c_{I_i}$ be the $i$th run
$($counting from the left$)$ of $\bm c$, where $I_i\subseteq[n]$
is the index set of $c_{I_i}$. Then we denote $\mathcal I_{\bm
c}=\{c_{I_1},\cdots,c_{I_{n'}}\}$ and call it \emph{the set of
runs} of $\bm c$, where $n'$ is the number of runs of $\bm c$.
\end{defn}

Let $\text{VT}$, $\xi$ and $\eta$ be the functions constructed by
Lemma \ref{VT-Code-Skch}, Lemma \ref{Ind-Bnry-2del} and Lemma
\ref{Reg-Bnry-2-Del}, respectively. Denote $$\rho=3d\log n$$ and
let
\begin{equation}\label{def-Ji-intvl}
J_j=\!\left\{\!\begin{aligned} &[(j-1)\rho\!+1, (j+1)\rho],
~\text{for}~j\in\!\{1,\cdots, \left\lceil n/\rho\right\rceil-2\},\\
&[(j-1)\rho\!+1, n], ~~~~~~~~~\text{for}~j=\left\lceil
n/\rho\right\rceil-1.
\end{aligned}\right.
\end{equation}
Note that each interval $J_j$ has length $2\rho$ and the
intersection of two successive intervals $J_{j}$ and $J_{j+1}$ is
an interval of length $\rho$. It is easy to see the following
remark.
\begin{rem}\label{rem-sets-Ji}
The intervals $J_j$, $j=1,\cdots,\left\lceil n/\rho\right\rceil-1$
satisfies:
\begin{itemize}
 \item[1)] For any interval $J\subseteq[n]$ of length at most
 $\rho$, we can find an
 $j_0\in\{1,2,\cdots,\left\lceil n/\rho\right\rceil-1\}$
 such that $J\subseteq J_{j_0}$.
 \item[2)] $J_j\cap J_{j'}=\emptyset$ for all $j,j'\in\{1,2,\cdots,
 \left\lceil n/\rho\right\rceil-1\}$ such that $|j-j'|\geq 2$.
\end{itemize}
\end{rem}

For each $q$-ary string $\bm x\in\mathbb Z_q^n$, let $M_{\bm
x}=(c_{i,j})$ be the matrix representation of $\bm x$ as defined
by \eqref{q-B-Repr} and $c_{1,[n]}$ be the first row of $M_{\bm
x}$. We construct a function $f$ as follows.

\textbf{Construction 1}: For each $\bm x\in\mathbb Z_q^n$, let
$\mathcal I_{\bm c}=\{c_{I_1},\cdots,c_{I_{n'}}\}$ be the set of
runs of $\bm c=c_{1,[n]}$ as defined in Definition
\ref{def-run-set}. For each $i\in[n']$, let
$$g_i(\bm x)=\left(\text{VT}(c_{2,I_i}), \text{VT}(c_{3,I_i}),
\cdots, \text{VT}(c_{\lceil\log q\rceil,I_i})\right),$$ and for
each $\ell\in\{0,1\}$, let
\begin{align}\label{def-2del-g-ell}
g^{(\ell)}(\bm x)=\sum_{i=1}^{n'}i^\ell
g_i(\bm x)~\text{mod}~2n^\ell N_1,\end{align} where
$$N_1=q^{\log\log n+3}.$$ Moreover, for each
$j\in\{1,\cdots,\left\lceil n/\rho\right\rceil-1\}$, let
$$h_j(\bm x)=\left(\xi(c_{2,J_j}),\xi(c_{3,J_j}),\cdots,\xi(c_{\lceil\log
q\rceil,J_j})\right),\footnotemark{}$$ and for each
$\ell\in\{0,1\}$, let
\begin{align}\label{Constr1-h-ell}
h^{(\ell)}(\bm x)=\sum_{\substack{j\in\{1,\cdots,\left\lceil
n/\rho\right\rceil-1\}:\\
j\!~\equiv\!~\ell~\text{mod}~2}} h_j(\bm
x)~\text{mod}~N_2\end{align} where $$N_2=q^{7\log\log n+o(\log\log
n)}.$$ Finally, let
$$f(\bm x)=\left(\eta(c_{1,[n]}),g^{(0)}(\bm x),g^{(1)}(\bm x),
h^{(0)}(\bm x),h^{(1)}(\bm x)\right).$$ \footnotetext{Note that
Lemma \ref{Ind-Bnry-2del} requires that each $|I_j|\geq 3$. If
$|I_j|<3$, we can just let $\xi(c_{i,I_j})=c_{i,I_j}$. Then
$c_{i,I_j}$ can also be recovered from $\xi(c_{i,I_j})$. This is
feasible because in our construction, we only need that each
$\xi(c_{i,I_j})$ is a sequence of length not greater than
$7\log\log n+o(\log\log n)$.}

Let $\mathcal R_n$ denote the set of all $\bm x\in\mathbb Z_q^{n}$
such that $c_{1,[n]}$ is a regular string with $d=7~($according to
Definition \ref{Regu}$)$. Then we have the following Theorem.

\begin{thm}\label{thm-2del-sketch}
The function $f(\bm x)$ is computable in linear time and the
length $|f(\bm x)|$ of $f(\bm x)$ satisfies $$|f(\bm x)|\leq 5\log
n+O(\log q\log\log n).$$ Moreover, if $\bm{x}\in\mathcal R_n$,
then $\bm x$ can be uniquely recovered from $f(\bm x)$ and any
given $\bm y\in\mathcal D_2(\bm x)$.
\end{thm}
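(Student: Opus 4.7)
My plan is to split the claim into a length/complexity bound and a decoding argument. The length of $f(\bm x)$ is additive over its five components: Lemma~\ref{Reg-Bnry-2-Del} gives $|\eta(c_{1,[n]})| = 4\log n + 10\log\log n + O(1)$; from the definitions, $|g^{(0)}(\bm x)| \leq \log(2N_1) = O(\log q\log\log n)$ and $|g^{(1)}(\bm x)| \leq \log(2nN_1) = \log n + O(\log q\log\log n)$ since $\log N_1 = (\log\log n+3)\log q$; and $|h^{(0)}(\bm x)| + |h^{(1)}(\bm x)| \leq 2\log N_2 = O(\log q\log\log n)$. Summing these yields $5\log n + O(\log q\log\log n)$. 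Linear-time computability follows because $\text{VT}$, $\xi$, and $\eta$ are linear-time (Lemmas~\ref{VT-Code-Skch}, \ref{Ind-Bnry-2del}, \ref{Reg-Bnry-2-Del}) and both the run sum and the window sum range over $O(n)$ terms.

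The recovery of $\bm x$ from $f(\bm x)$ and any $\bm y\in\mathcal D_2(\bm x)$ proceeds in three phases. First, the first row of $M_{\bm y}$ is a 2-deletion subsequence of $c_{1,[n]}$, and the regularity hypothesis $\bm x\in\mathcal R_n$ together with $\eta(c_{1,[n]})$ and Lemma~\ref{Reg-Bnry-2-Del} recovers $c_{1,[n]}$ uniquely. Second, comparing $c_{1,[n]}$ with the first row of $M_{\bm y}$ determines the multiset of runs hit by the two deletions, splitting the analysis into Case~(a), both deletions lying in a single run $c_{I_{i_0}}$, and Case~(b), one deletion in each of two distinct runs $c_{I_{i_1}}, c_{I_{i_2}}$. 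Third, in Case~(b) each affected run loses exactly one symbol per row, so I would compute the run-wise VT syndromes $g_i(\bm y)$ directly from $\bm y$ under the hypothesis and subtract them from $g^{(0)}(\bm x)$ and $g^{(1)}(\bm x)$; this yields, modulo $2N_1$ and $2nN_1$ respectively, the linear combinations $g_{i_1}(\bm x)-g_{i_1}(\bm y)+g_{i_2}(\bm x)-g_{i_2}(\bm y)$ and $i_1(g_{i_1}(\bm x)-g_{i_1}(\bm y))+i_2(g_{i_2}(\bm x)-g_{i_2}(\bm y))$, whose $2\times 2$ system is invertible once $i_1\neq i_2$ are known, delivering $g_{i_1}(\bm x)$ and $g_{i_2}(\bm x)$; Lemma~\ref{VT-Code-Skch} then decodes $c_{k,I_{i_1}}$ and $c_{k,I_{i_2}}$ for each $k\geq 2$, which together with $c_{1,[n]}$ completes $M_{\bm x}$.

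Case~(a) is handled analogously using $h^{(\ell)}$ and $\xi$: Remark~\ref{rem-Regu-length} gives $|I_{i_0}|\leq d\log n<\rho$, so by Remark~\ref{rem-sets-Ji} both deletion positions sit inside some $J_{j_0}$, and any second affected window $J_{j_0\pm 1}$ must have parity opposite to $j_0$; hence $h^{(0)}$ and $h^{(1)}$ each receive a contribution from at most one affected window, which is extracted by subtracting the analogous $\bm y$-side sum modulo $N_2$ and passed to Lemma~\ref{Ind-Bnry-2del} to decode each $c_{k,J_{j_0}}$. The main obstacle is verifying that no modular wrap-around occurs: one must check that each $g_i(\bm x)$ fits in $N_1$ and each $h_j(\bm x)$ fits in $N_2$ using the bounds $|I_i|\leq d\log n$ and $|J_j|=6d\log n$, so that a single VT syndrome contributes $O(\log\log n)$ bits and a single $\xi$-syndrome contributes $7\log\log n+o(\log\log n)$ bits, and multiplying by the $\lceil\log q\rceil-1$ affected rows matches the choices $N_1 = q^{\log\log n+3}$ and $N_2 = q^{7\log\log n + o(\log\log n)}$ exactly. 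A secondary subtlety is that in degenerate configurations where a short run is deleted entirely and adjacent runs merge, the multiset of affected runs may not be uniquely determined from $c_{1,[n]}$ and the first row of $M_{\bm y}$; one then tries each candidate pattern and retains the unique one whose decoded $\bm x$ is consistent with all of $g^{(0)}(\bm x), g^{(1)}(\bm x), h^{(0)}(\bm x), h^{(1)}(\bm x)$ and $\bm y$.
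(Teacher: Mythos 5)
The length bound and linear-time computability parts of your argument are fine and essentially match the paper's. The decoding argument, however, has a real gap in the case analysis that your final ``secondary subtlety'' paragraph papers over rather than resolves.

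Your two-case split (a) both deletions inside a single run / (b) one deletion in each of two distinct runs is the natural combinatorial dichotomy, but it does not line up with what the syndromes $g^{(\ell)}$ and $h^{(\ell)}$ can actually decode. The paper instead proves Lemma~\ref{lem-2del-pstn}: comparing $\bm b$ with the recovered $\bm c=c_{1,[n]}$, \emph{exactly one} of two things happens --- either the pair of affected runs $\{c_{I_{j_1}},c_{I_{j_2}}\}$ is \emph{uniquely} determined by $\bm b$ and $\bm c$ (so the $g^{(\ell)}$ system can be set up with the correct $j_1,j_2$), or there is an interval $J$ of length at most $\rho=3d\log n$ that is guaranteed to contain \emph{every} admissible pair of deletion positions (so the $h^{(\ell)}$ machinery applies). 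The ambiguous situation that you defer to ``try each candidate and retain the unique consistent one'' is precisely where your (a)/(b) split fails: when one of the two affected runs has length one and sits next to an alternating substring, $\bm b$ can be obtained by deleting from several different pairs of runs, so you do not know which $i_1,i_2$ to plug into your $2\times 2$ system, and you give no argument that the brute-force consistency check singles out the true $\bm x$ among the candidates. The paper avoids this entirely: that configuration lands in Case~2 of Lemma~\ref{lem-2del-pstn}, where the interval $J$ has length bounded by $\rho$ (one run plus a maximal alternating substring plus another run, each of length at most $d\log n$ by Remark~\ref{rem-Regu-length}), and then $J\subseteq J_{j_0}$ for some $j_0$ by Remark~\ref{rem-sets-Ji} and the parity-$\ell_0$ sum $h^{(\ell_0)}$ isolates $h_{j_0}(\bm x)$. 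Your Case~(a), which only covers a single run of length $\leq d\log n$, is too narrow to capture this --- note that $\rho$ is chosen as $3d\log n$ precisely to accommodate the run--alternation--run configuration of the ambiguous case, not merely a single run.

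A secondary issue in your Case~(b): you propose to form $g_i(\bm y)$ and subtract. For unaffected runs this is fine since $g_i(\bm y)=g_i(\bm x)$, but for the two affected runs $g_{i_1}(\bm y)$ is a syndrome of a \emph{shorter} string, and your claimed differences $g_{i_k}(\bm x)-g_{i_k}(\bm y)$ carry no useful decoding information by themselves. The paper's cleaner route is to compute $g_j(\bm x)$ directly from $\bm y$ only for $j\notin\{j_1,j_2\}$ (those runs are intact in $M_{\bm y}$, observations i), iii), v) in the proof), subtract from $g^{(0)}$ and $g^{(1)}$ modulo $2N_1$ and $2nN_1$ respectively, and solve the resulting two-equation system for $g_{j_1}(\bm x)$ and $g_{j_2}(\bm x)$, which is what Lemma~\ref{VT-Code-Skch} actually needs. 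Your modular-wrap check is correct in spirit, but you should verify $g_j(\bm x)<N_1$ (not a difference) and $h_j(\bm x)<N_2$ for every $j$, as the paper does. To repair your proof you essentially need to prove Lemma~\ref{lem-2del-pstn} and restate your case split along its dichotomy; without it, uniqueness of decoding is not established.
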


To prove Theorem \ref{thm-2del-sketch}, we need the following
lemma.

\begin{lem}\label{lem-2del-pstn}
Suppose $\bm c\in\{0,1\}^n$ is regular and $\bm b\in\{0,1\}^{n-2}$
such that $\bm b$ can be obtained from $\bm c$ by deleting two
symbols of $\bm c$. Then exact one of the following holds.
\begin{itemize}
 \item[1)] There are two distinct runs $c_{I_{j_1}}$ and $c_{I_{j_2}}$ of $\bm
 c$, uniquely determined by $\bm b$ and $\bm c$,
 such that $\bm b$ can be obtained from $\bm c$ by deleting one symbol
 in $c_{I_{j_1}}$ and one symbol in $c_{I_{j_2}}$.
 \item[2)] There is an interval $J\subseteq[n]$ of length at most
 $\rho$ such that $\bm b$ can be obtained from $\bm c$ by deleting two
 symbols in $c_{J}$.
\end{itemize}
\end{lem}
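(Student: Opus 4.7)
My plan is to parameterize every valid deletion pair by the ordered pair of runs it touches and then use regularity to bound the ambiguity. The standard observation I would start from is that if $(i_1,i_2)$ and $(i_1',i_2')$ are two pairs of positions with $i_1,i_1'$ lying in a common run and $i_2,i_2'$ lying in a common run, then deleting either pair from $\bm c$ produces the same string; so the natural invariant of a deletion pair is the ordered pair of run indices $(j_1,j_2)$ with $i_1\in I_{j_1}$ and $i_2\in I_{j_2}$. Let $\mathcal R$ denote the set of such ordered run pairs arising from some valid deletion pair yielding $\bm b$.

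If some element of $\mathcal R$ has $j_1=j_2$, that is both deletions fall inside a single run $I_j$, then Remark \ref{rem-Regu-length} gives $|I_j|\le d\log n<\rho$. Using the longest common prefix and the longest common suffix of $\bm c$ and $\bm b$, I would show that every other valid deletion pair is also confined to a $\rho$-neighborhood of $I_j$, which yields case 2 with $J$ an interval of length at most $\rho$.

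If instead every pair in $\mathcal R$ satisfies $j_1\ne j_2$, there are two sub-cases. When $|\mathcal R|=1$ the runs are uniquely determined and case 1 holds. When $|\mathcal R|\ge 2$ I would pick two distinct pairs in $\mathcal R$, equate the corresponding post-deletion strings symbol by symbol, and track the shift forced by the mismatch of the run indices; this comparison forces $c_k\ne c_{k+1}$ at every position in the window spanned by the four deletion sites, so that $\bm c$ is alternating on this window. Since $\bm c$ is regular, Remark \ref{rem-Regu-length} bounds such an alternating substring by $d\log n$; absorbing the two bordering runs, each also of length at most $d\log n$, gives an interval $J$ of length at most $3d\log n=\rho$ containing every valid deletion pair, which is case 2.

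The main obstacle I anticipate is this last combinatorial step: converting the equality $c_{[n]\setminus\{i_1,i_2\}}=c_{[n]\setminus\{i_1',i_2'\}}$, for two pairs differing in their run indices, into a forced alternating pattern on the intervening interval. I would handle this by a careful position-by-position comparison of the two shortened strings, in which each index where the two representations disagree about which run a symbol came from is transferred into a local inequality $c_k\ne c_{k+1}$; once this conversion is in hand, the regularity hypothesis supplies the $\rho$-bound needed for case 2 and thereby completes the dichotomy.
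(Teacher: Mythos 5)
Your strategy --- classify all position pairs whose deletion yields $\bm b$ by the ordered pair of runs they touch, then use regularity to bound the spread when that pair is not unique --- is the same strategy the paper takes; the paper isolates the pairwise comparison as a separate lemma (Lemma~\ref{lem-2del-yc}), which is exactly the ``symbol-by-symbol'' step you defer. The trichotomy you propose (some pair inside a single run; $|\mathcal R|=1$ with distinct runs; $|\mathcal R|\ge 2$) lines up with the paper's cases (both deletions in one run; non-adjacent runs or adjacent runs of lengths $\ge 2$; adjacent runs with one of length $1$).

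There is, however, a concrete inaccuracy in your description of the $|\mathcal R|\ge 2$ branch, and it creates an internal inconsistency. You claim the comparison ``forces $c_k\ne c_{k+1}$ at every position in the window spanned by the four deletion sites, so that $\bm c$ is alternating on this window,'' and then bound that window by $d\log n$ before ``absorbing the two bordering runs'' to reach $\rho=3d\log n$. But if the whole window $[j_1,j_2']$ spanned by all four deletion positions were alternating, you would already be done with $J$ of length $\le d\log n$; absorbing the bordering runs would serve no purpose. What the comparison actually forces is the structure in Lemma~\ref{lem-2del-yc}, case~2: writing the two competing pairs as $\{j_1,j_2\}$ and $\{j_1',j_2'\}$ with $j_1\le j_1'$, only the \emph{inner} interval $[j_2-1,\,j_1'+1]$ is forced to alternate, while the two \emph{outer} deletion positions $j_1$ and $j_2'$ land inside the runs of $\bm c$ that border this alternating block on its left and right. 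The correct picture is therefore run-prefix $+$ alternating middle $+$ run-suffix, with the four deletion positions distributed across all three pieces, and only then does regularity give $d\log n$ for each piece and hence the bound $\rho=3d\log n$ for an interval $J$ containing every valid pair. Your final constant is right, but the intermediate claim about the full window being alternating is false and needs to be replaced by this three-piece decomposition (or, equivalently, by invoking the structure of Lemma~\ref{lem-2del-yc} as the paper does).

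A smaller remark: in your first branch (some pair has both deletions inside one run $I_j$), the same pairwise comparison in fact forces \emph{every} valid deletion pair to lie inside $I_j$ itself, not merely in a $\rho$-neighborhood of it, since case~2 of Lemma~\ref{lem-2del-yc} would require $c_{j_1}\ne c_{j_2}$, contradicting that both lie in one run; so you can take $J=I_j$ directly.
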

\begin{proof}
This Lemma is proved in Appendix A.
\end{proof}

Now, we can prove Theorem \ref{thm-2del-sketch}.

\begin{proof}[Proof of Theorem \ref{thm-2del-sketch}]
Note that by Lemma \ref{VT-Code-Skch}, Lemma \ref{Ind-Bnry-2del}
and Lemma \ref{Reg-Bnry-2-Del}, the functions $\text{VT}$, $\xi$
and $\eta$ are all computable in linear time. By Construction 1,
the functions $g^{(\ell)}(\bm x)$ and $h^{(\ell)}(\bm x)$,
$\ell\in\{0,1\}$, are computable in linear time. Hence, $f(\bm
x)=\left(\eta(c_{1,[n]}),g^{(0)}(\bm x),g^{(1)}(\bm x),
h^{(0)}(\bm x),h^{(1)}(\bm x)\right)$ is computable in linear
time.

For each $\bm x\in\mathbb Z_q^n$, by Construction 1, the length
$|g^{(\ell)}(\bm x)|$ of $g^{(\ell)}(\bm x)$, $\ell\in\{0,1\}$,
satisfies
\begin{align*}|g^{(\ell)}(\bm x)|&\leq \log (2n^\ell
N_1)\\&=\ell\log n+\log q\log\log n+1\end{align*} Similarly, the
length $|h^{(\ell)}(\bm x)|$ of $h^{(\ell)}(\bm x)$,
$\ell\in\{0,1\}$, satisfies
\begin{align*}|h^{(\ell)}(\bm x)|&\leq \log N_2\\
&=\log q(7\log\log n+o(\log\log n))\end{align*} Moreover, by Lemma
\ref{Reg-Bnry-2-Del}, the length of $\eta(c_{1,[n]})$ satisfies
$$|\eta(c_{1,[n]})|\leq4\log n+10\log\log n+O(1).$$ Thus, by
Construction 1, the length of $f(\bm x)$ satisfies
\begin{align*}|f(\bm x)|&=|\eta(c_{1,[n]})|
+|g^{(0)}(\bm x)|+|g^{(1)}(\bm x)|\\&~~~~+|h^{(0)}(\bm
x)|+|h^{(1)}(\bm x)|\\&\leq 5\log n+(16\log q+10)\log\log
n\\&~~~~+o(\log q\log\log n)\\&=5\log n+O(\log q\log\log
n).\end{align*}

It remains to prove that for each $\bm{x}\in\mathcal R_n$, given
$f(\bm x)$ and any $\bm y\in\mathcal D_2(\bm x)$, one can uniquely
recover $\bm x$. To prove this, we first prove that $g_j(\bm
x)<N_1$ for each $j\in[n']$ and $h_j(\bm x)<N_2$ for each
$j\in\{1,2,\cdots,\left\lceil n/\rho\right\rceil-1\}$.

Since $\bm{x}\in\mathcal R_n$, by Remark \ref{rem-Regu-length},
each run of $c_{1,[n]}$ has length at most $7\log n~($noticing
that we take $d=7$ in this paper$)$, so for each
$i\in[2,\left\lceil\log q\right\rceil]$ and $j\in[n']$, we have
$c_{i,I_j}\in\{0,1\}^{\leq 7\log n}$. By Lemma \ref{VT-Code-Skch},
for each $j\in[2,\left\lceil\log q\right\rceil]$, the length of
$\text{VT}(c_{j,I_i})$ satisfies $|\text{VT}(c_{j,I_i})|\leq
\log(7\log n)\leq\log\log n+3.$ By Construction 1, the length of
$g_j(\bm x)$ satisfies \begin{align*}|g_j(\bm x)|&\leq
(\left\lceil\log q\right\rceil-1)(\log\log n+3)\\&<(\log
q)(\log\log n+3),\end{align*} so
$$g_j(\bm x)<q^{\log\log n+3}=N_1.$$
Similarly, for each $i\in[2,\left\lceil\log q\right\rceil]$ and
each $j\in\{1,2,\cdots,\left\lceil n/\rho\right\rceil-1\}$, by
\eqref{def-Ji-intvl}, the length of the interval $c_{i,J_j}$
satisfies $|c_{i,J_j}|\leq 2\rho=42\log n$, so by Lemma
\ref{Ind-Bnry-2del}, we have
\begin{align*}|\xi(c_{i,I_j})|&\leq 7\log(42\log n)+o(\log(42\log
n))\\&=7\log\log n+o(\log\log n).\end{align*} By Construction 1,
\begin{align*}|h_j(\bm x)|&\leq(\left\lceil\log q\right\rceil-1)(7\log\log
n+o(\log\log n))\\&<(\log q)(7\log\log n+o(\log\log
n)).\end{align*} Hence,
$$h_j(\bm x)<q^{7\log\log n+o(\log\log n)}=N_2.$$

Now, we prove that each $\bm{x}\in\mathcal R_n$ can be uniquely
recovered from $f(\bm x)=\left(\eta(c_{1,[n]}),g^{(0)}(\bm
x),g^{(1)}(\bm x), h^{(0)}(\bm x),h^{(1)}(\bm x)\right)$ and any
given $\bm y\in\mathcal D_2(\bm x)$. Let $$M_{\bm
y}=(d_{i,j})_{\left\lceil\log q\right\rceil\times (n-2)}$$ be the
matrix representation of $\bm y$. Then $M_{\bm y}$ can be obtained
from $M_{\bm x}$ by deleting two columns, so
$d_{1,[n-2]}\in\mathcal D_2(c_{1,[n]})$, where $d_{1,[n-2]}$ is
the first row of $M_{\bm y}$. Since $\bm x\in\mathcal R_n$, then
$c_{1,[n]}$ is regular. By Lemma \ref{Reg-Bnry-2-Del}, $\bm
c\triangleq c_{1,[n]}$ can be correctly recovered from $\bm
d\triangleq d_{1,[n-2]}$ and $\eta(c_{1,[n]})$. Moreover, by Lemma
\ref{lem-2del-pstn}, exact one of the following two cases holds:

\emph{Case 1}: There are two distinct runs $c_{1,I_{j_1}}$ and
$c_{1,I_{j_2}}$ of $c_{1,[n]}$ such that $d_{1,[n-2]}$ is obtained
from $c_{1,[n]}$ by deleting one symbol in $c_{1,I_{j_1}}$ and one
symbol $c_{1,I_{j_2}}$. Correspondingly, $M_{\bm y}$ can be
obtained from $M_{\bm x}$ by deleting one column in $I_{j_1}$ and
one column in $I_{j_2}$. Without loss of generality, assume
$j_1<j_2$. Denoting $I_{j}=[p_{j-1}+1,p_j]$, where
$p_0=0<p_1<p_2<\cdots <p_{n'}=n$, then by comparing the symbols of
$M_{\bm x}$ and $M_{\bm y}$, we have the following observation:
\begin{itemize}
 \item[i)] $c_{i,I_j}=d_{i,I_j}$ for $1\leq j<j_1$ and each
 $i\in[2,\left\lceil\log q\right\rceil]$;
 \item[ii)] $d_{i,[p_{j_1-1}+1,p_{j_1}-1]}\in \mathcal
 D_1(c_{i,I_{j_1}})$ for each
 $i\in[2,\left\lceil\log q\right\rceil]$;
 \item[iii)] $c_{i,I_j}=d_{i,I_j-1}$ for each $j_1<j<j_2$ and
 $i\in[2,\left\lceil\log q\right\rceil]$, where
 $I_j-1=\{\ell-1:\ell\in I_{j}\}=[p_{j-1},p_j-1]$;
 \item[iv)] $d_{i,[p_{j_2-1},p_{j_2}-2]}\in \mathcal
 D_1(c_{i,I_{j_2}})$ for each
 $i\in[2,\left\lceil\log q\right\rceil]$;
 \item[v)] $c_{i,I_j}=d_{i,I_j-2}$ for each $j_2<j\leq n'$ and
 $i\in[2,\left\lceil\log q\right\rceil]$, where
 $I_j-2=\{\ell-2:\ell\in I_{j}\}=[p_{j-1}-1,p_j-2]$.
\end{itemize}
Then $c_{i,[n]}$, $i\in[2,\left\lceil\log q\right\rceil]$, can be
recovered by the following three steps.

\textbf{Step 1}: By observations i), iii) and v), $c_{i,I_j}$ can
be directly obtained from $M_{\bm y}$ for all
$i\in[2,\left\lceil\log q\right\rceil]$ and
$j\in[n']\backslash\{j_1,j_2\}$.

\textbf{Step 2}: Compute $g_j(\bm x)$ from $c_{2,I_j}, \cdots,
c_{\left\lceil\log q\right\rceil,I_j}$ for all
$j\in[n']\backslash\{j_1,j_2\}$. This is possible because for all
$i\in[2,\left\lceil\log q\right\rceil]$ and
$j\in[n']\backslash\{j_1,j_2\}$, $c_{i,I_j}$ have been obtained
from $M_{\bm y}$ in Step 1. Then by taking $\ell=0$ in
\eqref{def-2del-g-ell}, we can obtain
\begin{align*}g_{j_1}(\bm x)+g_{j_2}(\bm x)\equiv \left(\!g^{(0)}(\bm
x)-\!\sum_{j\in[n']\backslash\{j_1,j_2\}}g_j(\bm
x)\!\right)\text{mod}~2N_1.\end{align*} Since $g_{j}(\bm x)<N_1$
for all $j\in[n']$, we in fact have
\begin{align}\label{slv-g1}
g_{j_1}(\bm x)+g_{j_2}(\bm x)= \left(\!g^{(0)}(\bm
x)-\!\sum_{j\in[n']\backslash\{j_1,j_2\}}g_j(\bm
x)\!\right)\text{mod}~2N_1.\end{align} Similarly, taking $\ell=1$
in \eqref{def-2del-g-ell} and noticing that $0<j_1g_{j_1}(\bm
x)+j_2g_{j_2}(\bm x)<2nN_1$, we can obtain
\begin{align}\label{slv-g2}&j_1g_{j_1}(\bm x)+j_2g_{j_2}(\bm x)\nonumber\\
&=\left(g^{(1)}(\bm
x)-\sum_{j\in[n']\backslash\{j_1,j_2\}}jg_j(\bm
x)\right)~\text{mod}~2nN_1.
\end{align}
So, $g_{j_1}(\bm x)$ and $g_{j_2}(\bm x)$ can be solved from
\eqref{slv-g1} and \eqref{slv-g2}. By Construction 1, we have
$$g_{j_1}(\bm x)=\left(\text{VT}(c_{2,I_{j_1}}),
\text{VT}(c_{3,I_{j_1}}), \cdots, \text{VT}(c_{\lceil\log
q\rceil,I_{j_1}})\right)$$ and $$g_{j_2}(\bm
x)=\left(\text{VT}(c_{2,I_{j_2}}), \text{VT}(c_{3,I_{j_2}}),
\cdots, \text{VT}(c_{\lceil\log q\rceil,I_{j_2}})\right).$$

\textbf{Step 3}: By observation ii),
$d_{i,[p_{j_1-1}+1,p_{j_1}-1]}\in \mathcal D_1(c_{i,I_{j_1}})$ for
each $i\in[2,\left\lceil\log q\right\rceil]$. Then by Lemma
\ref{VT-Code-Skch}, $c_{i,I_{j_1}}$ can be recovered from
$\text{VT}(c_{i,I_{j_1}})$ and $d_{i,[p_{j_1-1}+1,p_{j_1}-1]}$.
Similarly, since by observation iv),
$d_{i,[p_{j_2-1},p_{j_2}-2]}\in \mathcal D_1(c_{i,I_{j_2}})$ for
each $i\in[2,\left\lceil\log q\right\rceil]$, then by Lemma
\ref{VT-Code-Skch}, $c_{i,I_{j_2}}$ can be recovered from
$\text{VT}(c_{i,I_{j_2}})$ and $d_{i,[p_{j_2-1},p_{j_2}-2]}$.

Thus, for case 1, $c_{i,[n]}$, $i\in[2,\left\lceil\log
q\right\rceil]$, can be recovered from $\eta(c_{1,[n]})$,
$g^{(0)}(\bm x)$, $g^{(1)}(\bm x)$ and $\bm y$.

\emph{Case 2}: There is an interval $J\subseteq[n]$ of length at
most $\rho=3d\log n$ such that $\bm d\triangleq d_{1,[n-2]}$ can
be obtained from $\bm c\triangleq c_{1,[n]}$ by deleting two
symbols in $c_{1,J}$. Correspondingly, $M_{\bm y}$ can be obtained
from $M_{\bm x}$ by deleting two columns in $J$. By 1) of Remark
\ref{rem-sets-Ji}, we can always find an $J_{j_0}$ for some
$j_0\in\{1,2,\cdots,\left\lceil n/\rho\right\rceil-1\}$ such that
$J\subseteq J_{j_0}$. Denoting $J_{j_0}=[\lambda, \lambda']$, then
by comparing the symbols of $M_{\bm x}$ and $M_{\bm y}$, we obtain
that for each $i\in[2,\left\lceil\log q\right\rceil]$,
$$c_{i,[1,\lambda-1]}=d_{i,[1,\lambda-1]},$$ $$c_{i,[\lambda'+1,n]}
=d_{i,[\lambda'-1,n-2]}$$ and
$$d_{i,[\lambda,\lambda'-2]}\in\mathcal D_{\leq 2}(c_{i,[\lambda,\lambda']}).$$
Hence, $c_{i,[1,\lambda-1]}$ and $c_{i,[\lambda'+1,n]}$ can be
directly obtained from $M_{\bm y}$. Moreover, each
$c_{i,[\lambda,\lambda']}$ can be recovered from
$d_{i,[\lambda,\lambda'-2]}$ and $h^{(\ell)}(\bm x)$,
$\ell\in\{0,1\}$, as follows.

By 2) of Remark \ref{rem-sets-Ji}, $J_j\subseteq[1,\lambda-1]$ for
all $j\in\{1$, $2$, $\cdots$, $j_0-2\}$, so $c_{i,J_j}$ can be
obtained from $d_{i,[1,\lambda-1]}=c_{i,[1,\lambda-1]}$.
Similarly, $c_{i,J_j}$ can be obtained from
$d_{i,[\lambda'+1-t,n]}=c_{i,[\lambda'+1,n]}$ for all
$j\in\{j_0+2,\cdots,\left\lceil n/\delta'\right\rceil-1\}$. Hence,
we can compute $h_j(\bm
x)=\left(\xi(c_{2,J_j}),\xi(c_{3,J_j}),\cdots,\xi(c_{\lceil\log
q\rceil,J_j})\right)$ for each $j\in\{1,2,\cdots,\left\lceil
n/\delta'\right\rceil-1\}\backslash\{j_0\}$. Let $\ell\in\{0,1\}$
be such that $j_0\equiv\ell\mod 2$. Then by \eqref{Constr1-h-ell},
and noticing that $h_{j_0}(\bm x)<N_2$, we can obtain
\begin{align*}
h_{j_0}(\bm x)=h^{(\ell)}(\bm
x)-\sum_{\substack{j\in\{1,2,\cdots,\left\lceil
n/\delta'\right\rceil-1\}\backslash\{j_0\}:\\
j\!~\equiv\!~\ell~\text{mod}~2}} h_j(\bm
x)~\text{mod}~N_2.\end{align*} Note that
$b_{i,[\lambda,\lambda'-2]}\in\mathcal D_{\leq
2}(c_{i,[\lambda,\lambda']})=\mathcal D_{\leq 2}(c_{i,J_{j_0}})$,
and by Construction 1,
$$h_{j_0}(\bm
x)=\left(\xi(c_{2,J_{j_0}}),\xi(c_{3,J_{j_0}}),\cdots,\xi(c_{\lceil\log
q\rceil,J_{j_0}})\right).$$ Then by Lemma \ref{Ind-Bnry-2del}, for
each $i\in[2,\left\lceil\log q\right\rceil]$,
$c_{i,[\lambda,\lambda']}=c_{i,J_{j_0}}$ can be recovered from
$d_{i,[\lambda,\lambda'-2]}$ and $h_{j_0}(\bm x)$.

Thus, for case 2, $c_{i,[n]}$, $i\in[2,\left\lceil\log
q\right\rceil]$, can be recovered from $\eta(c_{1,[n]})$,
$h^{(\ell)}(\bm x)$ and $\bm y$.

By the above discussions, we proved that $M_{\bm x}~($and so $\bm
x)$ can be uniquely recovered from $f(\bm x)$ and $\bm y$, which
completes the proof.
\end{proof}

\vspace{0pt}By representing each binary string of length at most
$\lfloor\log q\rfloor$ as an integer in $\mathbb Z_q$, each binary
string $\bm a$ can be represented as a $q$-ary string of length
$\left\lceil|\bm a|/\lfloor\log q\rfloor\right\rceil$. We denote
this $q$-ary string by $\mathcal Q(\bm a)$ and call it the
$q$-\emph{ary representation} of $\bm a$ for convenience of use.
Specifically, divide $\bm a$ into $\left\lceil|\bm a|/\lfloor\log
q\rfloor\right\rceil$ disjoint substrings, each having length
$\lfloor\log q\rfloor$ except the last substring which has length
$|\bm a|-\left(\left\lceil|\bm a|/\lfloor\log
q\rfloor\right\rceil-1\right)\lfloor\log q\rfloor$. Then by
representing each of these substrings as an integer in $\mathbb
Z_q$, we can obtain a $q$-ary string $\mathcal Q(\bm a)$ of length
$\left\lceil|\bm a|/\lfloor\log q\rfloor\right\rceil$.

Let $\text{RegEnc}: \{1,2,\cdots,M\}\rightarrow\{0,1\}^n$ be the
one-to-one mapping constructed in Lemma \ref{Enc-Reg-Bnry}. Since
$M\geq 2^{n-1}$, then $\text{RegEnc}$ can also be viewed as a
mapping from $\{0,1\}^{n-1}$ to $\{0,1\}^n$. By Lemma
\ref{lem-En-B2Q}, the mapping $\text{RegEnc}$ can be extended to a
one-to-one mapping, denoted by $$\mathcal E_{\text{Reg}}:\mathbb
Z_q^{n-1}\rightarrow\mathbb Z_q^{n},$$ such that for any $\bm
u\in\mathbb Z_q^{n-1}$ and $\bm x=\mathcal E_{\text{Reg}}(\bm u)$,
if $M_{\bm{u}}=(b_{i,j})_{\lceil\log q\rceil\times(n-1)}$ and
$M_{\bm{x}}=(c_{i,j})_{\lceil\log q\rceil\times n}$ are the matrix
representation of $\bm u$ and $\bm x$ respectively, then
$$c_{1,[n]}=\text{RegEnc}(b_{1,[n-1]}).$$ By Lemma
\ref{Enc-Reg-Bnry}, $c_{1,[n]}=\text{RegEnc}(b_{1,[n-1]})$ is
regular, so for any $\bm u\in\mathbb Z_q^{n-1}$, we have $\bm
x=\mathcal E_{\text{Reg}}(\bm u)\in\mathcal R_n$.

Using the mapping $\mathcal E_{\text{Reg}}:\mathbb
Z_q^{n-1}\rightarrow\mathcal R_n$ and the function $f$ constructed
in Construction 1, we can give an encoding function of a $q$-ary
two-deletion correcting code as follows.

Let $\mathcal E$ be the function defined on $\mathbb Z_q^{n-1}$ of
the form
\begin{align}\label{2del-enc-fun}
\mathcal E(\bm u)=(\bm v, \bm v', \bm v''), ~\forall \!~\bm
u\in\mathbb Z_q^{n-1},
\end{align}
such that $\bm v=\mathcal E_{\text{Reg}}(\bm u)$, $\bm v'=\mathcal
E_{\text{Reg}}(\mathcal Q(f(\bm v)))$ and $\bm
v''=\text{Rep}_3(\mathcal Q(f(\bm v')))$, where
$\text{Rep}_{3}(\cdot)$ is the encoding function of the $3$-fold
repetition code.

\begin{thm}\label{thm-2del-enc}
The code $\mathcal C=\{\mathcal E(\bm u): \bm u\in\mathbb
Z_q^{n-1}\}$, where $\mathcal E$ is given by \eqref{2del-enc-fun},
is a $q$-ary two-deletion correcting code with redundancy $5\log
n+O(\log q\log\log n)$ in bits. The encoding complexity of
$\mathcal C$ is near-linear in $n$ with a polynomial size lookup
table.
\end{thm}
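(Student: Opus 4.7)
Set $n_1=|\bm v|=n$, $n_2=|\bm v'|$, $n_3=|\bm v''|$. I would chain Theorem~\ref{thm-2del-sketch}: applied to $\bm v\in\mathcal R_n$ it gives $|f(\bm v)|\leq 5\log n+O(\log q\log\log n)$ bits, so $n_2=|\mathcal Q(f(\bm v))|+1=O(\log n/\log q)+O(\log\log n)$ $q$-ary symbols; applied to $\bm v'\in\mathcal R_{n_2}$ it gives $|f(\bm v')|=O(\log\log n)$ bits, and hence $n_3=3\lceil|f(\bm v')|/\lfloor\log q\rfloor\rceil=o(\log n/\log q)$ $q$-ary symbols. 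Since the message has length $n-1$ in $\mathbb Z_q$, the redundancy in bits is $(n_2+n_3+1)\log q=5\log n+O(\log q\log\log n)$, as claimed. The encoding is near-linear because $\mathcal E_{\text{Reg}}$ is near-linear (Lemma~\ref{Enc-Reg-Bnry} combined with Lemma~\ref{lem-En-B2Q}), $f$ is linear (Theorem~\ref{thm-2del-sketch}), and $\mathcal Q$ and $\text{Rep}_3$ are trivial.

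\textbf{Decoding.} Given $\bm y\in\mathcal D_2(\mathcal E(\bm u))$, the decoder already knows $n_1,n_2,n_3$, which depend only on $n$ and $q$. Let $(t_1,t_2,t_3)$ with $t_i\geq 0$ and $t_1+t_2+t_3=2$ denote the unknown distribution of the 2 deletions among the three segments. My plan is to enumerate all six such triples and, for each, parse $\bm y$ as $\bm y_1\bm y_2\bm y_3$ with $|\bm y_i|=n_i-t_i$ and then decode from the innermost protection layer outward: first recover $\mathcal Q(f(\bm v'))$ from $\bm y_3\in\mathcal D_{t_3}(\bm v'')$ using the fact that the 3-fold $q$-ary repetition code $\text{Rep}_3$ corrects up to $2$ deletions; then apply Theorem~\ref{thm-2del-sketch} to recover $\bm v'\in\mathcal R_{n_2}$ from $f(\bm v')$ and $\bm y_2\in\mathcal D_{t_2}(\bm v')\subseteq\mathcal D_2(\bm v')$; extract $f(\bm v)=\mathcal Q^{-1}(\mathcal E_{\text{Reg}}^{-1}(\bm v'))$; invoke Theorem~\ref{thm-2del-sketch} once more to recover $\bm v\in\mathcal R_n$ from $f(\bm v)$ and $\bm y_1\in\mathcal D_{t_1}(\bm v)$; and finally output $\bm u=\mathcal E_{\text{Reg}}^{-1}(\bm v)$. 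Under the true hypothesis, every step succeeds and the correct $\bm u$ is produced.

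\textbf{Uniqueness and main obstacle.} To guarantee a unique output, for each candidate triple $(\hat{\bm v},\hat{\bm v}',\hat{\bm v}'')$ obtained under some hypothesis the decoder runs a consistency check: set $\hat{\bm u}=\mathcal E_{\text{Reg}}^{-1}(\hat{\bm v})$, recompute $\mathcal E(\hat{\bm u})$, and verify both that it equals $(\hat{\bm v},\hat{\bm v}',\hat{\bm v}'')$ and that $\bm y$ arises from it by deleting exactly $t_i$ symbols in the $i$th segment. The true hypothesis always passes. I expect the main obstacle to be showing that no incorrect hypothesis can also pass; this should follow from the injectivity of $\mathcal E_{\text{Reg}}$, $\mathcal Q$, $f$, and $\text{Rep}_3$, together with the observation that under a wrong parse either the inside-out decoding fails at some step (for example, $\text{Rep}_3^{-1}$ has no valid preimage of $\bm y_3$, or Theorem~\ref{thm-2del-sketch} returns a string that is not in $\mathcal R$, or the extracted helper data mismatches), or it produces a triple that does not re-encode to itself. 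Once this case analysis is carried out, $\bm u$ is uniquely recovered and the theorem follows.
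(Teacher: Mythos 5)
Your redundancy and complexity accounting is fine and matches the paper. The gap is in the decoding/uniqueness argument, and it is not a small one.

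You propose to enumerate the six triples $(t_1,t_2,t_3)$ with $t_1+t_2+t_3=2$, parse $\bm y$ accordingly, decode inside-out, and then accept the hypothesis that passes a re-encode-and-check test. The difficulty you flag --- that two different hypotheses might both pass --- is not a corner case you can discharge ``from injectivity'': if a second codeword $\bm c_2\neq\bm c_1$ had $\bm y\in\mathcal D_2(\bm c_2)$, then the hypothesis matching $\bm c_2$'s actual deletion pattern would pass your consistency check just as legitimately as the one for $\bm c_1$. Ruling that out is precisely the two-deletion-correcting property you are trying to prove, so the argument as sketched is circular. Moreover, one of your proposed ingredients is false: $f$ is a sketch mapping $\mathbb Z_q^n$ to roughly $5\log n$ bits, so it is very far from injective. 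Under the scheme, $f$-collisions cause no harm only because the decoder is always handed a valid two-deletion subsequence of the \emph{correct} segment together with its sketch value --- which is exactly the property your enumeration fails to guarantee when the hypothesis is wrong.

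The paper avoids enumeration entirely by the following fixed-window observation, which is the missing idea. With $m_1=|\bm v|$, $m_2=|(\bm v,\bm v')|$, $m_3=|(\bm v,\bm v',\bm v'')|$, set
\begin{align*}
\bm y^{(1)}=y_{[1,\,m_1-2]},\qquad \bm y^{(2)}=y_{[m_1+1,\,m_2-2]},\qquad \bm y^{(3)}=y_{[m_2+1,\,m_3-2]}.
\end{align*}
Then \emph{regardless} of how the two deletions are split among the three blocks, $\bm y^{(1)}\in\mathcal D_2(\bm v)$, $\bm y^{(2)}\in\mathcal D_2(\bm v')$, and $\bm y^{(3)}\in\mathcal D_2(\bm v'')$ (an easy case check over the six patterns). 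Decoding then becomes a single deterministic pass: recover $\mathcal Q(f(\bm v'))$ from $\bm y^{(3)}$ since $\bm v''=\text{Rep}_3(\cdot)$ is a two-deletion correcting code; recover $\bm v'$ from $\bm y^{(2)}$ and $f(\bm v')$ via Theorem~\ref{thm-2del-sketch}; read off $f(\bm v)$ from $\bm v'$; and recover $\bm v$ from $\bm y^{(1)}$ and $f(\bm v)$. Because this defines a function $\text{Dec}$ with $\text{Dec}(\bm y)=\bm x$ for every $\bm y\in\mathcal D_2(\bm x)$, two distinct codewords can never share a two-deletion descendant, so uniqueness is automatic --- no consistency check, no enumeration, and no appeal to injectivity of $f$. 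You should replace the hypothesis enumeration with this windowing argument; the rest of your write-up (redundancy and complexity) then stands.
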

\begin{proof}
Let $$\bm x=\mathcal E(\bm u)=(\bm v, \bm v', \bm v'')\in\mathcal
C,$$ where $\bm u\in\mathbb Z_q^{n-1}$, $\bm v=\mathcal
E_{\text{Reg}}(\bm u)$, $\bm v'=\mathcal Q(f(\bm v))$ and $\bm
v''=\text{Rep}_3(\mathcal Q(f(\bm v')))$ as in
\eqref{2del-enc-fun}. Given any $\bm y\in\mathcal D_2(\bm x)$, we
have $y_{[1,m_1-2]}\in\mathcal D_2(\bm v)$,
$y_{[m_1,m_2-2]}\in\mathcal D_2(\bm v')$ and
$y_{[m_2,m_3-2]}\in\mathcal D_2(\bm v'')$, where $|\bm v|=m_1,
|(\bm v,\bm v')|=m_2$ and $|\bm x|=|(\bm v,\bm v',\bm v'')|=m_3$.
First, since $\bm v''=\text{Rep}_3(\mathcal Q(f(\bm v')))$ is a
codeword of a two-deletion code, then $\mathcal Q(f(\bm v'))$ can
be recovered from $y_{[m_2,m_3-2]}$, and hence $f(\bm v')$ can be
recovered from $\mathcal Q(f(\bm v'))$. Then by Theorem
\ref{thm-2del-sketch}, $\bm v'$ can be recovered from
$y_{[m_1,m_2-2]}$ and $f(\bm v')$, and hence $f(\bm v)$ can be
recovered from $\bm v'=\mathcal E_{\text{Reg}}(\mathcal Q(f(\bm
v)))$. Finally, by Theorem \ref{thm-2del-sketch} again, $\bm v$
can be recovered from $y_{[1,m_1-2]}$ and $f(\bm v)$. Thus, $\bm
x=(\bm v, \bm v', \bm v'')$ can be recovered from any $\bm
y\in\mathcal D_2(\bm x)$, which proves that $\mathcal C$ is a
two-deletion correcting code.

Since $\bm u\in\mathbb Z_q^{n-1}$ and $\bm v=\mathcal
E_{\text{Reg}}(\bm u)\in\mathbb Z_q^{n}$, so $\bm v$ has $\log q$
bits redundancy. Moreover, by Theorem \ref{thm-2del-sketch}, the
length of $\bm v'$ is $$|\bm v'|=5\log n+O(\log q\log\log n)$$
bits and the length of $\bm v''$ is $$|\bm v''|=3(5\log |\bm
v'|+O(\log q\log\log |\bm v'|))=O(\log\log n)$$ bits. So the total
redundancy of $\bm x=\mathcal E(\bm u)$ is
\begin{align*}\text{redundancy~of}~\bar{\mathcal
C}&=\log q+|\bm v'|+|\bm v''|\\&=5\log n+O(\log q\log\log
n)\end{align*} in bits.

By Lemma \ref{Enc-Reg-Bnry} and Lemma \ref{lem-En-B2Q}, the
encoding complexity of $\bm v=\mathcal E_{\text{Reg}}(\bm u)$ is
near-linear in $n$ with a polynomial size lookup table. Moreover,
by Theorem \ref{thm-2del-sketch}, the encoding complexity of $\bm
v'=\mathcal E_{\text{Reg}}(\mathcal Q(f(\bm v)))$ and $\bm
v''=\text{Rep}_3(\mathcal Q(f(\bm v')))$ is linear in $n$ and
$\log n$ respectively. Therefore, the encoding complexity of
$\mathcal E(\bm u)=(\bm v, \bm v', \bm v'')$ is near-linear in $n$
with a polynomial size lookup table, which completes the proof.
\end{proof}

\section{Binary Codes Correcting a Burst of at most $t$ Deletions}

In this section, we present a construction of binary codes that
are capable of correcting a bursting of at most $t$ deletions
improving the Lenz-Polyanskii Construction in \cite{Lenz20}. We
assume that $t$ is a constant with respect to the code length $n$,
and for notational simplicity, we use $\gamma_t$ to denote any
constant that depends only on $t$. As in Section III, each binary
sequence $\bm a$ is identified with the positive integer whose
binary representation is $\bm a$, and summation and multiplication
of binary strings and integers are performed in the set of
integers.

Recall that a string $\bm c\in\{0,1\}^n$ is called $(\bm p,
\delta)$-dense, if each substring of $\bm c$ of length $\delta$
contains at least one pattern $\bm p$. As stated in Section II, we
take
$$\delta=t2^{t+1}\log n$$ and
$$\bm p=0^t1^t.$$

The basic idea of our construction is to replace the shifted VT
code in the Lenz-Polyanskii Construction with the function $\phi$
constructed in Lemma \ref{lem-Bnry-burst-Sima}. To apply the
function $\phi$, we need to divide each binary string into
substrings of length at most $2(\delta+t)$. Specifically, we
denote $\delta'=\delta+t$ and let
\begin{equation}\label{def-Li-intvl}
L_i=\!\left\{\!\begin{aligned} &[(i-1)\delta'\!+1, (i+1)\delta'],
~\text{for}~i\in\!\{1,\cdots, \left\lceil n/\delta'\right\rceil-2\},\\
&[(i-1)\delta'\!+1, n], ~~~~~~~\!~~~\text{for}~i=\left\lceil
n/\delta'\right\rceil-1,\vspace{12pt}
\end{aligned}\right.
\end{equation}
where $i\in\{1,\cdots, \left\lceil n/\delta'\right\rceil-1\}$, be
the index sets of the expected substrings. Then we can construct a
function $\bar{f}^{\text{b}}$ over $\{0,1\}^n$ as follows, which
is the main component of our construction of binary burst-deletion
correcting codes.

\textbf{Construction 2}: Let $\phi$ and $\mu$ be the functions
constructed in Lemma \ref{lem-Bnry-burst-Sima} and Lemma
\ref{lem-Bnry-burst-Lenz} respectively. For each $\bm
c\in\{0,1\}^n$, let
\begin{align*}
\bar{f}^{\text{b}}(\bm c)=\left(\mu(\bm c), \bar{g}^{(0)}(\bm c),
\bar{g}^{(1)}(\bm c)\right),\end{align*} such that for each
$\ell\in\{0,1\}$,
\begin{align}\label{Constr2-g-b}
\bar{g}^{(\ell)}(\bm
c)=\sum_{\substack{i\in\{1,2,\cdots,\left\lceil
n/\lceil\delta'\rceil\right\rceil-1\}:\\
i~\equiv~ \ell~\text{mod}~2}}
\phi(c_{L_i})~\text{mod}~\overline{N}^{\text{b}},\end{align} where
\begin{align*}\overline{N}^{\text{b}}\triangleq
2^{4\log(2\delta')+o(\log(2\delta'))}=2^{4\log\log
n+\gamma_t+o(\log\log n)}.\footnotemark{}\end{align*}
\footnotetext{Since $\delta'=\delta+t=t2^{t+1}(\log n+2^{-t-1})$,
so more accurately, it should be
$\overline{N}^{\text{b}}\triangleq
2^{4\log(2\delta')+o(\log(2\delta'))}=2^{4\log(\log
n+2^{-t-1})+\gamma_t+o(\log\log n)}$. However, because
$\overline{N}^{\text{b}}$ is an integer, so for sufficiently large
$n$, we can always obtain $\overline{N}^{\text{b}}\triangleq
2^{4\log(2\delta')+o(\log(2\delta'))}=2^{4\log\log
n+\gamma_t+o(\log\log n)}$.}

For Construction 2, we have the following theorem.
\begin{thm}\label{thm-b-burst-del-sketch}
For each $\bm c\in\{0,1\}^n$, $\bar{f}^{\text{b}}(\bm c)$ is
computable in linear time and the length $|\bar{f}^{\text{b}}(\bm
c)|$ of $\bar{f}^{\text{b}}(\bm c)$ satisfies
$$|\bar{f}^{\text{b}}(\bm c)|\leq\log n+8\log\log
n+\gamma_t+o(\log\log n).$$ Moreover, if $\bm c$ is $(\bm p,
\delta)$-dense, then given $\bar{f}^{\text{b}}(\bm c)$ and any
$\bm b\in\mathcal B_{\leq t}(\bm c)$, one can uniquely recover
$\bm c$.
\end{thm}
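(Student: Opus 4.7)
The plan is to first verify the stated length bound and near-linear computability of $\bar{f}^{\text{b}}$, then exhibit a decoder that reduces—via the overlapping covering $\{L_i\}$—to a single invocation of Lemma~\ref{lem-Bnry-burst-Sima} on the unique $L_{i_0}$ containing the deletion window. For the size bound, each $c_{L_i}$ has length at most $2\delta'=O(\log n)$, so Lemma~\ref{lem-Bnry-burst-Sima} gives $|\phi(c_{L_i})|\leq 4\log(2\delta')+o(\log(2\delta'))=4\log\log n+\gamma_t+o(\log\log n)$ and hence $\phi(c_{L_i})<\overline{N}^{\text{b}}$. Together with $|\mu(\bm c)|\leq \log n+3$ and the two accumulators of size at most $\log\overline{N}^{\text{b}}$ each, this sums to $\log n+8\log\log n+\gamma_t+o(\log\log n)$. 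For complexity, $\mu$ is linear time and $\phi$ is applied to $O(n/\log n)$ substrings each of length $O(\log n)$, giving near-linear total cost.

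For decoding, I would first apply Lemma~\ref{lem-Bnry-burst-Lenz} to $\mu(\bm c)$ and $\bm b$ to localize the deletions inside an interval $L\subseteq[n]$ of length at most $\delta'$. The intervals in~\eqref{def-Li-intvl} satisfy the direct analogue of Remark~\ref{rem-sets-Ji}: some $L_{i_0}$ contains $L$, and $L_i\cap L_{i'}=\emptyset$ whenever $|i-i'|\geq 2$. Set $\ell=i_0\bmod 2$. For each $i\equiv\ell\pmod 2$ with $i\neq i_0$, the disjointness $L_i\cap L_{i_0}=\emptyset$ gives $L_i\cap L=\emptyset$, so $L_i$ lies entirely on one side of $L$ and $c_{L_i}$ can be read directly from $\bm b$ by shifting indices past $L$ by $t'=n-|\bm b|$. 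Computing each such $\phi(c_{L_i})$ and subtracting from $\bar{g}^{(\ell)}(\bm c)$ modulo $\overline{N}^{\text{b}}$ then isolates
\[
\phi(c_{L_{i_0}})=\bar{g}^{(\ell)}(\bm c)-\!\!\sum_{\substack{i\equiv\ell\,(\bmod 2)\\ i\neq i_0}}\!\!\phi(c_{L_i})\pmod{\overline{N}^{\text{b}}},
\]
which is a genuine equality since $\phi(c_{L_{i_0}})<\overline{N}^{\text{b}}$.

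Finally, I would extract from $\bm b$ the substring $\bm b'$ corresponding to the positions of $L_{i_0}$ (no shift for indices to the left of $L$, shift by $t'$ for indices to the right of $L$). Then $\bm b'\in\mathcal B_{\leq t}(c_{L_{i_0}})$, and Lemma~\ref{lem-Bnry-burst-Sima} applied to $\phi(c_{L_{i_0}})$ and $\bm b'$ recovers $c_{L_{i_0}}$; stitching this with the portions of $\bm c$ outside $L_{i_0}$ (already readable from $\bm b$) yields $\bm c$. The step needing the most care is the combinatorial indexing: verifying the analogue of Remark~\ref{rem-sets-Ji} for $\{L_i\}$, checking that $L\subseteq L_{i_0}$ together with $|i-i_0|\geq 2$ indeed forces $L\cap L_i=\emptyset$, and keeping the $\pm t'$ index shifts consistent when extracting $c_{L_i}$ and $\bm b'$ from $\bm b$—an almost verbatim adaptation of the bookkeeping already carried out in the proof of Theorem~\ref{thm-2del-sketch}.
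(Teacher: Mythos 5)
Your proposal is correct and follows essentially the same route as the paper: use $\mu$ (via Lemma~\ref{lem-Bnry-burst-Lenz}) to localize the burst to an interval $L$ of length at most $\delta'$, cover $L$ by some $L_{i_0}$ from the overlapping family~\eqref{def-Li-intvl}, exploit the same-parity disjointness to read off $\phi(c_{L_i})$ for $i\neq i_0$ from $\bm b$ (applying the $t'$-shift to the right of the deleted window), subtract from $\bar g^{(\ell)}$ modulo $\overline N^{\mathrm b}$ to isolate $\phi(c_{L_{i_0}})$, and finish with Lemma~\ref{lem-Bnry-burst-Sima} on the substring $b_{[\lambda,\lambda'-t']}$. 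The length bound and the observation $\phi(c_{L_{i_0}})<\overline N^{\mathrm b}$ justifying the exact equality are handled exactly as in the paper.
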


Before proving Theorem \ref{thm-b-burst-del-sketch}, we give some
remark on the properties of the sets $L_j,
j=1,2,\cdots,\left\lceil n/\delta'\right\rceil-1$.
\begin{rem}\label{rem-sets-Li}
Similar to Remark \ref{rem-sets-Ji}, it is easy to see that
\begin{itemize}
 \item[1)] For each interval $L\subseteq[n]$ of length at most
 $\delta'=\delta+t$, we can always find an
 $i_0\in\{1,2,\cdots,\left\lceil n/\delta'\right\rceil-1\}$
 such that $L\subseteq L_{i_0}$.
 \item[2)] $L_i\cap L_{i'}=\emptyset$ for all $i,i'\in\{1,2,\cdots,
 \left\lceil n/\delta'\right\rceil-1\}$ such that $|i-i'|\geq 2$.
\end{itemize}
\end{rem}

Now, we can prove Theorem \ref{thm-b-burst-del-sketch}.
\begin{proof}
Note that by Lemma \ref{lem-Bnry-burst-Lenz}, $\mu(\bm c)$ is
computable in linear time. By Lemma \ref{lem-Bnry-burst-Sima},
each $\phi(c_{L_i})$ is computable in time
$O(2^t(2\delta)^3)=O((\log n)^3)$, so $(\bar{g}^{(0)}(\bm
c),\bar{g}^{(1)}(\bm c))$ are also computable in linear time.
Hence, by Construction 2, $\bar{f}^{\text{b}}(\bm c)=\left(\mu(\bm
c), \bar{g}^{(0)}(\bm c), \bar{g}^{(1)}(\bm c)\right)$ is
computable in linear time. Moreover, by Lemma
\ref{lem-Bnry-burst-Lenz} and \eqref{Constr2-g-b}, the length
$|\bar{f}^{\text{b}}(\bm c)|$ of $\bar{f}^{\text{b}}(\bm c)$
satisfies
\begin{align*}|\bar{f}^{\text{b}}(\bm c)|&=|\mu(\bm c)|+
|\bar{g}^{(0)}(\bm c)|+|\bar{g}^{(1)}(\bm c)|\\&\leq\log
n+3+2\big(4\log\log n+\gamma_t+o(\log\log n)\big)\\&=\log
n+8\log\log n+\gamma_t+o(\log\log n).\end{align*}

Suppose $\bm c$ is $(\bm p, \delta)$-dense and $\bm b\in\mathcal
B_{\leq t}(\bm c)$. We need to prove that $\bm c$ can be uniquely
recovered from $\bm b$ and $\bar{f}^{\text{b}}(\bm c)$.

By Lemma \ref{lem-Bnry-burst-Lenz}, we can find an interval
$L\subseteq[n]$ of length at most $\delta'=\delta+t$ such that
$\bm b=c_{[n]\backslash D}$ for some interval $D\subseteq L$ of
length $t'=|\bm c|-|\bm b|$. By 1) of Remark \ref{rem-sets-Li}, we
can always find an $i_0\in\{1,2,\cdots,\left\lceil
n/\delta'\right\rceil-1\}$ such that $L\subseteq L_{i_0}$.
Denoting $L_{i_0}=[\lambda, \lambda']$, then we can obtain
$$c_{[1,\lambda-1]}=b_{[1,\lambda-1]},$$ $$c_{[\lambda'+1,n]}
=b_{[\lambda'-t'+1,n]}$$ and
$$b_{[\lambda,\lambda'-t']}\in\mathcal B_{\leq t}(c_{[\lambda,\lambda']}).$$
Therefore, $c_{[1,\lambda-1]}$ and $c_{[\lambda'+1,n]}$ can be
directly obtained from $\bm b$. In the following, we will show how
to recover $c_{[\lambda,\lambda']}$ from
$b_{[\lambda,\lambda'-t']}$ and $\bar{g}^{(\ell)}(\bm c)$ for some
$\ell\in\{0,1\}$.

By 2) of Remark \ref{rem-sets-Li}, for all
$i\in\{1,2,\cdots,i_0-2\}$, we have $L_i\subseteq[1,\lambda-1]$,
so $c_{L_i}$ can be obtained from $b_{[1,\lambda-1]}$ and hence
$\phi(c_{L_i})$ can be computed. Similarly, for all
$i\in\{i_0+2,\cdots,\left\lceil n/\delta'\right\rceil-1\}$,
$c_{L_i}$ can be obtained from $b_{[\lambda'-t'+1,n]}$ and hence
$\phi(c_{L_i})$ can be computed. Let $\ell_0\in\{0,1\}$ be such
that $\ell_0~\equiv~ i_0~\text{mod}~2$. By \eqref{Constr2-g-b}, we
have
\begin{align*}
\phi(c_{L_{i_0}})\equiv\bar{g}^{(\ell_0)}(\bm
c)-\sum_{\substack{i\in\{1,\cdots,\left\lceil
n/\lceil\delta'\rceil\right\rceil-1\}:\\
i\!~\neq\!~ i_0\!~\text{and}\!~ i\!~\equiv\!~
\ell_0\!~\text{mod}\!~2}}
\phi(c_{L_i})~\text{mod}~\overline{N}^{\text{b}}.\end{align*} By
\eqref{def-Li-intvl}, $|L_{i_0}|=2\delta'=2(\delta+t)$, so by
Lemma \ref{lem-Bnry-burst-Sima}, $\phi(c_{L_{i_0}})\leq
2^{4\log(2\delta')+o(\log(2\delta'))}=\overline{N}^{\text{b}}$.
Therefore, we actually have
\begin{align*}
\phi(c_{L_{i_0}})=\bar{g}^{(\ell_0)}(\bm
c)-\sum_{\substack{i\in\{1,\cdots,\left\lceil
n/\lceil\delta'\rceil\right\rceil-1\}:\\
i\!~\neq\!~ i_0\!~\text{and}\!~ i\!~\equiv\!~
\ell_0\!~\text{mod}\!~2}}
\phi(c_{L_i})~\text{mod}~\overline{N}^{\text{b}}.\end{align*}
Since $b_{[\lambda,\lambda'-t']}\in\mathcal B_{\leq
t}(c_{[\lambda,\lambda']})$, again by Lemma
\ref{lem-Bnry-burst-Sima}, we can recover $c_{[\lambda,\lambda']}$
from $b_{[\lambda,\lambda'-t]}$ and $\phi(c_{L_{i_0}})$. Note that
we have obtained $c_{[1,\lambda-1]}=b_{[1,\lambda-1]}$ and
$c_{[\lambda'+1,n]} =b_{[\lambda'-t+1,n]}$, so $\bm c$ can be
uniquely recovered, which completes the proof.
\end{proof}

Let $\mathcal S^{\text{b}}_n$ be the set of all $(\bm p,
\delta)$-dense binary strings $\bm c\in\{0,1\}^{n}$, where $\bm
p=0^t1^t$ and $\delta=t2^{t+1}\lceil\log n\rceil$. By Lemma
\ref{lem-p-dense}, there is a one-to-one mapping that maps each
binary string of length $n-1$ to a string in $\mathcal
S^{\text{b}}_n$. For convenience, we denote this mapping by
\begin{align}\label{E-B-Den}\mathcal
E^{\text{b}}_{\text{Den}}:\{0,1\}^{n-1}\rightarrow\mathcal
S^{\text{b}}_n.\end{align} Using the function $\bar{f}^{\text{b}}$
constructed in Construction 2, we can construct an encoding
function of a binary code capable of correcting a burst of at most
$t$ deletions.

Let $\bar{\mathcal E}^{\text{b}}$ be a function defined on
$\{0,1\}^{n-1}$ of the form
\begin{align}\label{BBtdel-enc-fun}
\bar{\mathcal E}^{\text{b}}(\bm a)=(\bm b, \bm b', \bm b''),
~~\forall\!~\bm a\in\{0,1\}^{n-1},
\end{align}
such that $\bm b=\mathcal E^{\text{b}}_{\text{Den}}(\bm a)$, $\bm
b'=\mathcal E^{\text{b}}_{\text{Den}}(\bar{f}^{\text{b}}(\bm b))$
and $\bm b''=\text{Rep}_{t+1}(\bar{f}^{\text{b}}(\bm b'))$, where
$\text{Rep}_{t+1}(\cdot)$ is the encoding function of the
$(t+1)$-fold repetition code.

\begin{thm}\label{thm-bbdel-enc}
The code $\bar{\mathcal C}^{\text{b}}=\{\bar{\mathcal
E}^{\text{b}}(\bm a): \bm a\in\{0,1\}^{n-1}\}$, where
$\bar{\mathcal E}^{\text{b}}$ is given by \eqref{BBtdel-enc-fun},
is a binary code with redundancy $\log n+9\log\log
n+\gamma_t+o(\log\log n)$ bits and capable of correcting a burst
of at
most $t$ deletions. 
\end{thm}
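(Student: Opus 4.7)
The approach is to adapt the proof of Theorem~\ref{thm-2del-enc}, using the $(t+1)$-fold repetition code in place of $\text{Rep}_3$ and Theorem~\ref{thm-b-burst-del-sketch} in place of Theorem~\ref{thm-2del-sketch}. Fix $\bm{a}\in\{0,1\}^{n-1}$ and write $\bm{x}=\bar{\mathcal E}^{\text{b}}(\bm{a})=(\bm{b},\bm{b}',\bm{b}'')$ as in \eqref{BBtdel-enc-fun}. Let $\bm{y}\in\mathcal B_{\leq t}(\bm{x})$ be obtained by deleting a burst of $t'=|\bm{x}|-|\bm{y}|\leq t$ consecutive positions. The three block lengths are fixed functions of $n$ known to the decoder, so the $O(t)=O(1)$ possible distributions of the $t'$ deletions across the three blocks can be enumerated to locate the splits of $\bm{y}$; below I describe the decoding under the correct hypothesis, which will be recognized by consistency checks.

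\textbf{Decoding.} First, recover $\bar{f}^{\text{b}}(\bm{b}')$ from the tail of $\bm{y}$: since $\bm{b}''$ repeats each bit of $\bar{f}^{\text{b}}(\bm{b}')$ exactly $t+1$ times, any burst of length at most $t$ overlapping $\bm{b}''$ deletes at most $t$ symbols from any single repetition group, leaving at least one survivor per group from which the corresponding bit can be read; if the burst does not touch $\bm{b}''$, the tail is received intact and the reading is trivial. Second, by Theorem~\ref{thm-b-burst-del-sketch}, since $\bm{b}'\in\mathcal S^{\text{b}}_n$ is $(\bm{p},\delta)$-dense and the middle region of $\bm{y}$ lies in $\mathcal B_{\leq t}(\bm{b}')$, the pair consisting of $\bar{f}^{\text{b}}(\bm{b}')$ and this middle region uniquely recovers $\bm{b}'$. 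Third, extract $\bar{f}^{\text{b}}(\bm{b})=(\mathcal E^{\text{b}}_{\text{Den}})^{-1}(\bm{b}')$ and apply Theorem~\ref{thm-b-burst-del-sketch} once more to recover the $(\bm{p},\delta)$-dense string $\bm{b}$ from the head of $\bm{y}$; finally $\bm{a}=(\mathcal E^{\text{b}}_{\text{Den}})^{-1}(\bm{b})$.

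\textbf{Redundancy and main obstacle.} The string $\bm{b}$ contributes $1$ bit of redundancy over $\bm{a}$; Theorem~\ref{thm-b-burst-del-sketch} gives $|\bar{f}^{\text{b}}(\bm{b})|\leq\log n+8\log\log n+\gamma_t+o(\log\log n)$, and after $\mathcal E^{\text{b}}_{\text{Den}}$ adds $1$ bit the same bound governs $|\bm{b}'|$; since $|\bm{b}'|=\Theta(\log n)$, a second application of the lemma yields $|\bar{f}^{\text{b}}(\bm{b}')|=\log\log n+o(\log\log n)$, so $|\bm{b}''|=(t+1)|\bar{f}^{\text{b}}(\bm{b}')|$ contributes the remaining $\log\log n$ term to the leading asymptotic while the $t$-dependent factor is absorbed into $\gamma_t$. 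Summing produces the claimed $\log n+9\log\log n+\gamma_t+o(\log\log n)$. The main obstacle is the boundary-straddling case: I would argue that a burst straddling, say, $\bm{b}$ and $\bm{b}'$ restricts to a burst of length at most $t$ in each block, so each side remains within the scope of Theorem~\ref{thm-b-burst-del-sketch} once the correct split has been identified among the $O(t)$ candidates, with wrong candidates ruled out by consistency of the alleged $\bm{b}''$-repetition pattern and of the output of $(\mathcal E^{\text{b}}_{\text{Den}})^{-1}$.
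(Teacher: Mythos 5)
Your overall plan matches the paper's: a three-layer encoding $(\bm b,\bm b',\bm b'')$, decoded tail-first via the repetition code, then the middle block via $\bar f^{\text{b}}$, then the head, with the redundancy bookkeeping you give following the paper's. The gap is in how you handle bursts that straddle block boundaries. You propose to enumerate the $O(t)$ ways the $t'$ deletions could distribute over the three blocks and to rule out wrong candidates by \emph{consistency checks}, but you never prove that only one candidate survives. Without that, you have not shown that $\bm d$ determines $\bm c$ uniquely, which is precisely what correcting a burst of at most $t$ deletions requires; a priori two distinct codewords $\bm c\neq\bm c'$ could both have $\bm d\in\mathcal B_{\leq t}(\bm c)\cap\mathcal B_{\leq t}(\bm c')$ and both pass your checks, and your writeup offers no argument excluding this.

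The paper sidesteps enumeration entirely with a windowing observation. Since $t'=|\bm c|-|\bm d|$ is observable from $\bm d$ and the block lengths $m_1,m_2,m_3$ are fixed, the paper takes the fixed windows $d_{[1,m_1-t']}$, $d_{[m_1,m_2-t']}$, $d_{[m_2,m_3-t']}$ of $\bm d$ and observes that they lie in $\mathcal B_{\leq t}(\bm b)$, $\mathcal B_{\leq t}(\bm b')$, $\mathcal B_{\leq t}(\bm b'')$ respectively, \emph{regardless} of where the burst occurred: a burst that straddles a boundary, or that sits entirely outside a given block, still shows up in the corresponding window as a burst of at most $t$ deletions (possibly at the window's edge). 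This makes the three-stage recovery deterministic and uniqueness immediate, with no hypothesis enumeration needed. To close the gap, replace your enumeration-plus-consistency argument with this windowing claim and verify it in each case (burst inside one block, burst straddling two adjacent blocks).
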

\begin{proof}
Let $$\bm c=\bar{\mathcal E}^{\text{b}}(\bm a)=(\bm b, \bm b', \bm
b'')\in\bar{\mathcal C}^{\text{b}},$$ where $\bm
a\in\{0,1\}^{n-1}$, $\bm b=\mathcal E^{\text{b}}_{\text{Den}}(\bm
a)$, $\bm b'=\mathcal
E^{\text{b}}_{\text{Den}}(\bar{f}^{\text{b}}(\bm b))$ and $\bm
b''=\text{Rep}_{t+1}(\bar{f}^{\text{b}}(\bm b'))$. Given any $\bm
d\in\mathcal B_{\leq t}(\bm c)$, denoting $t'=|\bm c|-|\bm b|$,
then $t'\leq t$ and we have $d_{[1,m_1-t']}\in\mathcal B_{\leq
t}(\bm b)$, $d_{[m_1,m_2-t']}\in\mathcal B_{\leq t}(\bm b')$ and
$d_{[m_2,m_3-t']}\in\mathcal B_{\leq t}(\bm b'')$, where $m_1=|\bm
b|, m_2=|(\bm b,\bm b')|$ and $m_3=|\bm c|=|(\bm b,\bm b',\bm
b'')|$. First, since $\bm
b''=\text{Rep}_{t+1}(\bar{f}^{\text{b}}(\bm b'))$ is a codeword of
a $t$-deletion code, then $\bar{f}^{\text{b}}(\bm b')$ can be
recovered from $d_{[m_2,m_3-t']}$. Further, by Theorem
\ref{thm-b-burst-del-sketch}, $\bm b'$ can be recovered from
$d_{[m_1,m_2-t']}$ and $\bar{f}^{\text{b}}(\bm b')$, and so
$\bar{f}^{\text{b}}(\bm b)$ can be recovered from $\bm b'=\mathcal
E^{\text{b}}_{\text{Den}}(\bar{f}^{\text{b}}(\bm b))$. Finally, by
Theorem \ref{thm-b-burst-del-sketch} again, $\bm b$ can be
recovered from $d_{[1,m_1-t']}$ and $\bar{f}^{\text{b}}(\bm b)$.
Thus, $\bm c=(\bm b, \bm b', \bm b'')$ can be recovered from any
$\bm d\in\mathcal B_{\leq t}(\bm c)$, which proves that
$\bar{\mathcal C}^{\text{b}}$ is capable of correcting a burst of
at most $t$ deletions.

Since $\bm a\in\{0,1\}^{n-1}$ and $\bm b=\mathcal
E^{\text{b}}_{\text{Den}}(\bm a)\in\mathcal
S^{\text{b}}_n\subseteq\{0,1\}^{n}$, so $\bm b$ has one bit
redundancy. Moreover, by Theorem \ref{thm-b-burst-del-sketch}, the
length of $\bm b'$ is $$|\bm b'|=\log n+8\log\log
n+\gamma_t+o(\log\log n)$$ bits and the length of $\bm b''$ is
\begin{align*}
|\bm b''|&=\log |\bm b'|+8\log\log |\bm b'|+\gamma_t+o(\log\log
|\bm b'|)\\&=\log\log n+\gamma_t+o(\log\log n)\end{align*} bits.
So the total redundancy of $\bm c=\bar{\mathcal E}^{\text{b}}(\bm
a)$ is
\begin{align*}\text{redundancy~of}~\bar{\mathcal
C}&=1+|\bm b'|+|\bm b''|\\&=\log n+9\log\log n+\gamma_t+o(\log\log
n)\end{align*} bits.
\end{proof}

\section{$q$-ary Codes Correcting a Burst of at most $t$ Deletions}

In this section, we construct $q$-ary codes correcting a bursting
of at most $t$ deletions, where $q>2$ is an even integer. We
assume that $q$ and $t$ are constant with respect to the code
length $n$. As in Section III, we identify each binary string $\bm
a$ with the positive integer whose binary representation is $\bm
a$. As stated in Section II, we take
$$\delta=t2^{t+1}\log n$$ and
$$\bm p=0^t1^t.$$ A string $\bm c\in\{0,1\}^n$ is called $(\bm p,
\delta)$-dense, if each substring of $\bm c$ of length $\delta$
contains at least one pattern $\bm p$.

For each $\bm x\in\mathbb Z_q^n$, let
$M_{\bm{x}}=(c_{i,j})_{\lceil\log q\rceil\times n}$ be the matrix
representation of $\bm x$ as defined by \eqref{q-B-Repr}. Then for
each $t'\in[t]$, the deletion of $x_{i},x_{i+1},\cdots,x_{i+t'-1}$
results in the deletion of the columns $i,i+1,\cdots,i+t'-1$ of
$M_{\bm{x}}$. A basic idea is to protect the first row $\bm
c=c_{1,[n]}$ by a burst-deletion correcting code. However, in
general, if $\bm c$ can be recovered from a $\bm d\in\mathcal
B_{\leq t}(\bm c)$, the location of the deleted symbols can not be
determined. For example, consider $\bm c=0111011011010010$ and
$\bm d=0111011010010$. Then $\bm d$ can be obtained from $\bm c$
by deleting $c_3c_4c_5=110$, or deleting $c_4c_5c_6=101$. In fact,
$\bm d$ can be obtained from $\bm c$ by deleting
$c_ic_{i+1}c_{i+2}$ for all $i\in[3,10]$. To proceed, we need to
consider period of binary strings.

Let $\ell$ and $m$ be two positive integers such that $\ell\leq
m$. A string $\bm a\in\{0,1\}^m$ is said to have \emph{period}
$\ell~($or $\bm a$ is called a period-$\ell$ string$)$ if
$a_{i+\ell}=a_i$ for all $i\in[m-\ell]=\{1,2,\cdots,m-\ell\}$.
Clearly, a run of $\bm c$ of length $m$ has period $\ell$ for any
$\ell\in[m]$; a period-$2$ substring of $\bm c$ is either a run of
$\bm c$ or an alternative substring of $\bm c$.

\begin{lem}\label{lem-prd-idx-cap}
Suppose $\bm c\in\{0,1\}^n$ is $(\bm p, \delta)$-dense. Given any
$\bm d\in\mathcal B_{\leq t}(\bm c)$, it is possible to find an
interval $K\subseteq[n]$ of length at most $\delta-1$ such that if
$\bm d=c_{[n]\backslash D}$ and $D\subseteq[n]$ is an interval,
then it always holds that $D\subseteq K$.
\end{lem}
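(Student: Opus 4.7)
The plan is to parametrize the set of candidate deletion intervals by their left endpoint, show it is contained in the range from its leftmost element to its rightmost element, and then use the density property to bound the length of that range. Set $t' = n - |\bm d|$, the common length of every interval $D$ satisfying $c_{[n]\backslash D} = \bm d$, and let $\mathcal D$ denote the collection of all such intervals. Write $D_{\min} = [i_{\min}, i_{\min} + t' - 1]$ and $D_{\max} = [i_{\max}, i_{\max} + t' - 1]$ for the members of $\mathcal D$ with the smallest and largest starting indices, respectively. Every $D \in \mathcal D$ then lies inside $K := [i_{\min}, i_{\max} + t' - 1]$, so it suffices to show $|K| \leq \delta - 1$.

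The first substantive step will be to prove that the substring $c_K$ has period $t'$. Writing out $\bm d = c_{[n]\backslash D_{\min}}$ gives $d_k = c_k$ for $k < i_{\min}$ and $d_k = c_{k+t'}$ for $k \geq i_{\min}$, while the analogous identity for $D_{\max}$ gives $d_k = c_k$ for $k < i_{\max}$. Equating the two expressions for $d_k$ at each $k \in [i_{\min}, i_{\max} - 1]$ yields $c_k = c_{k+t'}$ throughout this range, which is precisely the statement that $c_K$ has period $t'$.

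The second step is the observation that the pattern $\bm p = 0^t 1^t$ has no period $\ell \in [1,t]$: its symbol at position $t+1$ equals $1$, whereas its symbol at position $t+1-\ell \in [1,t]$ equals $0$, contradicting periodicity with shift $\ell$. Since $t' \leq t$, the periodic substring $c_K$ cannot contain $\bm p$; the $(\bm p, \delta)$-density of $\bm c$ then forces $|K| \leq \delta - 1$, as required.

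The main obstacle I anticipate is the first step, specifically the careful bookkeeping needed to verify that the equalities forced by $D_{\min}$ and $D_{\max}$ really do imply period $t'$ on the full interval $K$, including the case when $D_{\min}$ and $D_{\max}$ overlap (so the range $[i_{\min}, i_{\max}-1]$ partially coincides with $D_{\min}$ itself). The degenerate case $|\mathcal D| = 1$, where one simply takes $K = D_{\min}$ of length $t' \leq t \leq \delta - 1$, needs only a one-line separate check, after which the density argument immediately closes the proof.
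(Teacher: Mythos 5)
Your proof is correct and rests on the same two pillars as the paper's: comparing two valid deletion intervals forces $c_k = c_{k+t'}$ on the indices between their left endpoints, and $(\bm p, \delta)$-density then bounds the length of any period-$t'$ substring by $\delta - 1$, since $\bm p = 0^t 1^t$ has no period $\ell \le t$. The packaging differs slightly. The paper fixes one valid interval $D'$, takes $K$ to be the maximal period-$t'$ substring of $\bm c$ containing $c_{D'}$, and shows that any other valid $D$ produces a period-$t'$ interval containing $c_D$ and $c_{D'}$, hence contained in $c_K$ by maximality. You instead take $K$ to be the convex hull $[i_{\min}, i_{\max}+t'-1]$ of all valid deletion intervals and prove $c_K$ itself has period $t'$ directly by comparing the two extremes $D_{\min}$ and $D_{\max}$. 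Your $K$ is the minimal valid choice and is never longer than the paper's, at the cost of the degenerate-case check $|\mathcal D| = 1$ that you rightly flag and dispatch; the paper's maximality definition silently covers that case but may yield a slightly longer (still $\le \delta - 1$) interval. The periodicity test on $\bm p$ is the same idea with different indices (you compare positions $t+1$ and $t+1-\ell$, the paper compares $t$ and $t+t'$). Both arguments close the same way.
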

\begin{proof}
Since $\bm d\in\mathcal B_{\leq t}(\bm c)$, there is an interval
$D'\subseteq[n]$ such that $\bm d=c_{[n]\backslash D'}$. Let
$K\subseteq[n]$ be the interval such that $c_K$ is the maximal
substring of $\bm c$ satisfying: 1) $c_K$ has period $t'=|\bm
c|-|\bm d|$; 2) $c_K$ contains $c_{D'}$. We will prove that
$D\subseteq K$ for any interval $D\subseteq[n]$ such that $\bm
d=c_{[n]\backslash D}$.

Suppose $D=[i_1, i_1+t'-1]$ and $D'=[i_2, i_2+t'-1]$. Without loss
of generality, assume $i_1\leq i_2$. Since $c_{[n]\backslash
D}=\bm d=c_{[n]\backslash D'}$, we have
\begin{align*}
~&c_1~\cdots
~c_{i_1-1}~c_{i_1+t'}~c_{i_1+t'+1}\!~\cdots~c_{i_2+t'-1}
~c_{i_2+t'}~\cdots ~c_n\\
=~&c_1~\cdots ~c_{i_1-1}~~c_{i_1}~~~~c_{i_1+1}~~~\!~\cdots
~~c_{i_2-1}~~~c_{i_2+t'}~\dots~c_n.
\end{align*}
By comparing the symbols of $c_{[n]\backslash D'}$ and
$c_{[n]\backslash D''}$ in each position, we can obtain
$c_i=c_{i+t'}$ for each $i\in[i_1,i_2-1]$. So,
$c_{[i_1,i_2+t'-1]}$ is a substring of $\bm c$ of period $t'$ and
contains both $c_{D}$ and $c_{D'}$. As $c_K$ is the maximal
substring of $\bm c$ of period $t'$ that contains $c_{D'}$, so
$c_{[i_1,i_2+t'-1]}$ is contained in $c_K$. Thus, $c_{D}$ is
contained in $c_K$, which implies that $D\subseteq K$.

Since $\bm c$ is $(\bm p, \delta)$-dense, where $\bm p=0^t1^t$,
then each substring of $\bm c$ of length $\delta$ contains at
least one pattern $\bm p$. Note that for each $t'\in[t]$, we have
$p_{t}=0\neq 1=p_{t+t'}$, so each substring of $\bm c$ of length
$\delta$ can not has period $t'$. In other words, the length of
any period-$t'$ substring of $\bm c$ is at most $\delta-1$. Thus,
the length of $c_I~($and the length of $I)$ is at most $\delta-1$.
\end{proof}

Let
\begin{equation}\label{def-Ki-intvl}
K_j=\!\left\{\!\begin{aligned} &[(j-1)\delta\!+1, (j+1)\delta],
~\text{for}~j\in\!\{1,\cdots, \left\lceil n/\delta\right\rceil-2\},\\
&[(j-1)\delta\!+1, n], ~~~~~~~\!~~\!~\text{for}~j=\left\lceil
n/\delta\right\rceil-1.\vspace{12pt}
\end{aligned}\right.
\end{equation}

\begin{rem}\label{rem-sets-Ki}
Similar to Remark \ref{rem-sets-Ji}, it is easy to see that
\begin{itemize}
 \item[1)] For any interval $K\subseteq[n]$ of length at most
 $\delta$, there is an
 $j_0\in\{1,2,\cdots,\left\lceil n/\delta\right\rceil-1\}$
 such that $K\subseteq K_{j_0}$.
 \item[2)] $K_j\cap K_{j'}=\emptyset$ for all $j,j'\in\{1,2,\cdots,
 \left\lceil n/\delta\right\rceil-1\}$ such that $|j-j'|\geq 2$.
\end{itemize}
\end{rem}

Let $\phi$ be the function constructed by Lemma
\ref{lem-Bnry-burst-Sima} and $\bar{f}^{\text{b}}$ be the function
constructed in Construction 2. For each $\bm x\in\mathbb Z_q^n$,
let $M_{\bm{x}}=(c_{i,j})_{\lceil\log q\rceil\times n}$ be the
matrix representation of $\bm x$ as defined by \eqref{q-B-Repr}.
We have the following construction.

\textbf{Construction 3}: For each $\bm x\in\mathbb Z_q^n$ and each
$j\in\{1,2,\cdots, \left\lceil n/\delta\right\rceil-1\}$, let
$$\bar{h}_j(\bm x)=\left(\phi(c_{2,K_j}),\phi(c_{3,K_j}),\cdots,\phi(c_{\lceil\log
q\rceil,K_j})\right)$$ and for each $\ell\in\{0,1\}$, let
\begin{align}\label{Con3-h-ell}
\bar{h}^{(\ell)}(\bm x)=\sum_{\substack{j\in\{1,2,\cdots,
\left\lceil n/\delta\right\rceil-1\}:\\
j\!~\equiv\!~\ell~\text{mod}~2}} \bar{h}_j(\bm
x)~\text{mod}~\overline{N},\end{align} where
$$\overline{N}=q^{4\log\log n+o(\log\log n)+\gamma_t}.$$
Finally, let
\begin{align}\label{Constr2-f}
\bar{f}(\bm x)=\left(\bar{f}^{\text{b}}(c_{1,[n]}),
\bar{h}^{(0)}(\bm x), \bar{h}^{(1)}(\bm x)\right).\end{align}

We have the following theorem.

\begin{thm}\label{thm-burst-del-sketch}
For any $\bm x\in\mathbb Z_q^n$, $\bar{f}(\bm x)$ is computable in
linear time, and when viewed as a binary string, the length
$|\bar{f}(\bm x)|$ of $\bar{f}(\bm x)$ satisfies
$$|\bar{f}(\bm x)|\leq\log n+8(\log q+1)\log\log
n+o(\log q\log\log n)+\gamma_t,$$ where $\gamma_t$ is a constant
depending only on $t$. Moreover, if $\bm c=c_{1,[n]}$ is $(\bm p,
\delta)$-dense, then given $\bar{f}(\bm x)$ and any $\bm
y\in\mathcal B_{\leq t}(\bm x)$, one can uniquely recover $\bm x$.
\end{thm}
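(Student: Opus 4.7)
The plan is to follow the blueprint of Theorem \ref{thm-2del-sketch}, with $(\bm p,\delta)$-density replacing regularity and Lemma \ref{lem-prd-idx-cap} replacing Lemma \ref{lem-2del-pstn}. The three claims -- linear-time computability, the redundancy bound, and the recovery guarantee -- are handled in that order. Linear-time computability is immediate from Lemma \ref{lem-Bnry-burst-Sima} and Theorem \ref{thm-b-burst-del-sketch}, because there are only $O(n/\log n)$ intervals $K_j$, each of length at most $2\delta$. For the length, Theorem \ref{thm-b-burst-del-sketch} gives $|\bar{f}^{\text{b}}(c_{1,[n]})|\leq\log n+8\log\log n+\gamma_t+o(\log\log n)$, while by construction $|\bar{h}^{(\ell)}(\bm x)|\leq\log\overline{N}=(\log q)(4\log\log n+o(\log\log n)+\gamma_t)$; summing the three contributions yields the advertised bound.

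For the recovery, let $M_{\bm y}=(d_{i,j})$ be the matrix representation of $\bm y\in\mathcal{B}_{\leq t}(\bm x)$, obtained from $M_{\bm x}$ by deleting an interval of $t'\leq t$ consecutive columns. In particular $d_{1,[n-t']}\in\mathcal{B}_{\leq t}(c_{1,[n]})$, and since $c_{1,[n]}$ is $(\bm p,\delta)$-dense, Theorem \ref{thm-b-burst-del-sketch} recovers $c_{1,[n]}$ from $\bar{f}^{\text{b}}(c_{1,[n]})$ and this first row. Applying Lemma \ref{lem-prd-idx-cap} to the now-known $c_{1,[n]}$ and $d_{1,[n-t']}$ produces an interval $K\subseteq[n]$ of length at most $\delta-1$ that must contain the deletion interval $D$; by part 1 of Remark \ref{rem-sets-Ki} there exists $j_0$ with $K\subseteq K_{j_0}$, and I write $K_{j_0}=[\lambda,\lambda']$.

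The key step is to isolate $\bar{h}_{j_0}(\bm x)$ from the modular sum in \eqref{Con3-h-ell}. A routine size estimate, using $|K_{j_0}|\leq 2\delta$ and Lemma \ref{lem-Bnry-burst-Sima}, shows $\bar{h}_{j_0}(\bm x)<\overline{N}$. Pick the unique $\ell\in\{0,1\}$ with $j_0\equiv\ell\pmod 2$; every other index $j\equiv\ell\pmod 2$ appearing in \eqref{Con3-h-ell} satisfies $|j-j_0|\geq 2$, so by part 2 of Remark \ref{rem-sets-Ki} the corresponding $K_j$ is disjoint from $K_{j_0}\supseteq D$ and lies entirely in $[1,\lambda-1]$ or in $[\lambda'+1,n]$. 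The block $c_{i,K_j}$ can therefore be read off directly from $d_{i,[1,\lambda-1]}$ or from $d_{i,[\lambda'-t'+1,n-t']}$, after accounting for the $t'$-position shift on the right. Every such $\bar{h}_j(\bm x)$ is then computable from $\bm y$, and because $\bar{h}_{j_0}(\bm x)<\overline{N}$, relation \eqref{Con3-h-ell} pins down $\bar{h}_{j_0}(\bm x)$ exactly. By Construction 3 this exposes $\phi(c_{i,K_{j_0}})$ for each $i\geq 2$, and since $d_{i,[\lambda,\lambda'-t']}\in\mathcal{B}_{\leq t}(c_{i,K_{j_0}})$, a final invocation of Lemma \ref{lem-Bnry-burst-Sima} reconstructs $c_{i,K_{j_0}}$; concatenating with the previously known outer portions restores each row $c_{i,[n]}$, hence $M_{\bm x}$ and $\bm x$.

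The main obstacle is confining the deletion interval $D$ to a single $K_{j_0}$: rows $i\geq 2$ of $M_{\bm x}$ carry no structural assumption like $(\bm p,\delta)$-density, and in general the burst location on them is ambiguous from $\bm y$. This confinement is exactly what Lemma \ref{lem-prd-idx-cap} buys via the density of $c_{1,[n]}$, together with the width-$2\delta$ overlap structure of the $K_j$'s, and without it the two-coloring by parity behind $\bar{h}^{(0)}$ and $\bar{h}^{(1)}$ would be unable to isolate a single burst-affected block.
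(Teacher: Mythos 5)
Your proposal is correct and follows essentially the same three-step recovery as the paper: recover $c_{1,[n]}$ via $\bar f^{\text b}$, localize the burst to a single $K_{j_0}$ through Lemma \ref{lem-prd-idx-cap} and Remark \ref{rem-sets-Ki}, then peel off $\bar h_{j_0}$ from the parity-class sum \eqref{Con3-h-ell} using the bound $\bar h_{j_0}(\bm x)<\overline N$ and invoke Lemma \ref{lem-Bnry-burst-Sima} on each remaining row. The only cosmetic difference is that you observe directly that all indices $j\equiv\ell\pmod 2$, $j\neq j_0$ satisfy $|j-j_0|\ge 2$, whereas the paper first computes $\bar h_j$ for every $j$ with $|j-j_0|\ge 2$ before restricting to the relevant parity class.
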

\begin{proof}
Note that by Theorem \ref{thm-b-burst-del-sketch},
$\bar{f}^{\text{b}}(c_{1,[n]})$ is computable in linear time.
Moreover, by Lemma \ref{lem-Bnry-burst-Sima} and
\eqref{def-Ki-intvl}, each $\phi(c_{2,K_j})$ is computable in time
$O(2^t(2\delta)^3)=O((\log n)^3)$, so by Construction 3,
$\bar{h}^{(\ell)}(\bm x), \ell=1,2,$ are computable in linear
time. Hence, $\bar{f}(\bm x)=\left(\bar{f}^{\text{b}}(c_{1,[n]}),
\bar{h}^{(0)}(\bm x), \bar{h}^{(1)}(\bm x)\right)$ is computable
in linear time.

By Theorem \ref{thm-b-burst-del-sketch}, the length of
$\bar{f}^{\text{b}}(c_{1,[n]})$ satisfies
$$|\bar{f}^{\text{b}}(c_{1,[n]})|\leq\log n+8\log\log
n+\gamma_t+o(\log\log n).$$ Moreover, by Construction 3, the
length of $\bar{h}^{(\ell)}(\bm x), \ell=1,2,$ satisfy
$$|\bar{h}^{(\ell)}(\bm x)|\leq\log\overline{N}
=\log q(4\log\log n+o(\log\log n)+\gamma_t).$$ Hence, the length
of $\bar{f}(\bm x)$ satisfies
\begin{align*}
|\bar{f}(\bm
x)|&=|\bar{f}^{\text{b}}(c_{1,[n]})|+|\bar{g}^{(0)}(\bm
x)|+|\bar{g}^{(1)}(\bm x)|\\&\leq\log n+8(\log q+1)\log\log
n+o(\log q\log\log n)\\&~~~+\gamma_t.\end{align*}

It remains to prove that if $\bm c=c_{1,[n]}$ is $(\bm p,
\delta)$-dense, then given $\bar{f}(\bm x)$ and any $\bm
y\in\mathcal B_{\leq t}(\bm x)$, one can uniquely recover $\bm x$.
To prove this, we first prove that $$\bar{h}_j(\bm
x)<\overline{N}$$ for each $j\in\{1,2,\cdots, \left\lceil
n/\delta\right\rceil-1\}$. In fact, by \eqref{def-Ki-intvl}, each
$c_{i,K_j}$, $i\in[2,\lceil\log q\rceil]$, has length
$2\delta=2t2^{t+1}\log n$, so by Lemma \ref{lem-Bnry-burst-Sima},
$\phi(c_{2,K_j})$ has length
$4\log(2\delta)+o(\log(2\delta))=4\log\log n+\gamma_t+o(\log\log
n)$. Hence, by Construction 3, we have
\begin{align*}
|\bar{h}_j(\bm
x)|&=|\left(\phi(c_{2,K_j}),\phi(c_{3,K_j}),\cdots,\phi(c_{\lceil\log
q\rceil,K_j})\right)|\\&=(\lceil\log q\rceil-1)\left(4\log\log
n+\gamma_t+o(\log\log n)\right)\\&<\log q\left(4\log\log
n+\gamma_t+o(\log\log n)\right),
\end{align*} which implies that $\bar{h}_j(\bm
x)<q^{4\log\log n+\gamma_t+o(\log\log n)}=\overline{N}$.

Now, we prove that $\bm x$ can be uniquely recovered from
$\bar{f}(\bm x)$ and any given $\bm y\in\mathcal B_{\leq t}(\bm
x)$, provided that $\bm c=c_{1,[n]}$ is $(\bm p, \delta)$-dense.
Let $$M_{\bm y}=(d_{i,j})_{\left\lceil\log q\right\rceil\times
(n-t')}$$ be the matrix representation of $\bm y$. Since $\bm
y\in\mathcal B_{\leq t}(\bm x)$ can be obtained from $\bm x$ by
deleting $t'$ consecutive symbols from $\bm x$, where $t'=n-|\bm
y|$ and $t'\in[t]$, then $M_{\bm y}$ can be obtained from $M_{\bm
x}$ by deleting $t'$ consecutive columns of $M_{\bm x}$. The
process of recovering $\bm x$ from $\bar{f}(\bm
x)=\left(\bar{f}^{\text{b}}(c_{1,[n]}), \bar{g}^{(0)}(\bm x),
\bar{g}^{(1)}(\bm x)\right)$ and $\bm y$ consists of the following
three steps.

\textbf{Step 1}: Since $\bm c=c_{1,[n]}$ is $(\bm p,
\delta)$-dense, then by Theorem \ref{thm-b-burst-del-sketch},
$c_{1,[n]}$ can be recovered from $d_{1,[n-t']}$ and
$\bar{f}^{\text{b}}(c_{1,[n]})$.

\textbf{Step 2}: According to Lemma \ref{lem-prd-idx-cap}, there
is an interval $K\subseteq[n]$ of length at most $\delta-1$ such
that $d_{1,[n-t']}$ is obtained from $\bm c=c_{1,[n]}$ by deleting
$t'$ consecutive symbols in $K$. Correspondingly, $M_{\bm y}$ is
obtained from $M_{\bm x}$ by deleting $t'$ consecutive columns in
$K$. By 1) of Remark \ref{rem-sets-Ki}, there is an
$j_0\in\{1,2,\cdots,\left\lceil n/\delta\right\rceil-1\}$ such
that $K\subseteq K_{j_0}$. Denote $K_{j_0}=[\lambda,\lambda']$.
Then we have the following observations:
\begin{itemize}
 \item[i)] $c_{i,[1,\lambda-1]}=d_{i,[1,\lambda-1]}$ for each
 $i\in[2,\lceil\log q\rceil]$.
 \item[ii)] $d_{i,[\lambda,\lambda'-t']}\in
 \mathcal B_{\leq t}(c_{i,[\lambda,\lambda']}$ for each
 $i\in[2,\lceil\log q\rceil]$.
 \item[iii)] $c_{i,[\lambda'+1,n]}=d_{i,[\lambda'-t'+1,n-t']}$ for each
 $i\in[2,\lceil\log q\rceil]$.
\end{itemize}
By observations i) and iii), for each $i\in[2,\lceil\log
q\rceil]$, $c_{i,[1,\lambda-1]}$ and $c_{i,[\lambda'+1,n]}$ can be
directly obtained from $M_{\bm y}$.

\textbf{Step 3}: For $j\in\{1,\cdots, j_0-2\}$, we have $K_j\cap
K_{j_0}=\emptyset$, by 2) of Remark \ref{rem-sets-Ki}, so
$K_j\subseteq[1,\lambda-1]$ and by observation i), $\bar{h}_j(\bm
x)=\left(\phi(c_{2,K_j}),\phi(c_{3,K_j}),\cdots,\phi(c_{\lceil\log
q\rceil,K_j})\right)$ can be computed from
$d_{i,[1,\lambda-1]}=c_{i,[1,\lambda-1]}$, $i=2,\cdots,\lceil\log
q\rceil$. Similarly, for $j\in\{j_0+2,\cdots,\left\lceil
n/\delta\right\rceil-1\}$, by 2) of Remark \ref{rem-sets-Ki}, we
have $K_j\cap K_{j_0}=\emptyset$, so $K_j\subseteq[\lambda'+1,n]$
and by observation iii), $\bar{h}_j(\bm
x)=\left(\phi(c_{2,K_j}),\phi(c_{3,K_j}),\cdots,\phi(c_{\lceil\log
q\rceil,K_j})\right)$ can be computed from
$d_{i,[\lambda'-t'+1,n-t']}=c_{i,[\lambda'+1,n]}$,
$i=2,\cdots,\lceil\log q\rceil$. Let $\ell_0\in\{0,1\}$ be such
that $\ell_0\equiv j_0~\text{mod}~2$. By \eqref{Con3-h-ell}, we
can obtain
\begin{align*}\bar{h}_{j_0}(\bm
x)\equiv\bar{h}^{(\ell_0)}(\bm
x)-\sum_{\substack{j\in\{1,2,\cdots,
\left\lceil n/\delta\right\rceil-1\}\backslash\{j_0\}:\\
j\!~\equiv\!~\ell_0~\text{mod}~2}} \bar{h}_j(\bm
x)~\text{mod}~\overline{N}.\end{align*} Note that we have proved
that $\bar{h}_j(\bm x)<\overline{N}$ for each $j\in\{1,2,\cdots,
\left\lceil n/\delta\right\rceil-1\}$, so we actually have
\begin{align*}\bar{h}_{j_0}(\bm
x)=\bar{h}^{(\ell_0)}(\bm x)-\sum_{\substack{j\in\{1,2,\cdots,
\left\lceil n/\delta\right\rceil-1\}\backslash\{j_0\}:\\
j\!~\equiv\!~\ell_0~\text{mod}~2}} \bar{h}_j(\bm
x)~\text{mod}~\overline{N},\end{align*} where by Construction 3,
$$\bar{h}_{j_0}(\bm
x)=\left(\phi(c_{2,K_{j_0}}),\phi(c_{3,K_{j_0}}),\cdots,\phi(c_{\lceil\log
q\rceil,K_{j_0}})\right).$$ By observation ii) in Step 2, and by
Lemma \ref{lem-Bnry-burst-Sima}, $c_{i,[\lambda,\lambda']}$,
$i=2,\cdots,\lceil\log q\rceil$, can be recovered from
$\bar{h}_{j_0}(\bm x)$ and $d_{i,[\lambda,\lambda'-t']}$,
$i=2,\cdots,\lceil\log q\rceil$.

By the above discussions, $M_{\bm x}~($and so $\bm x)$ can be
uniquely recovered from $\bar{f}(\bm x)$ and any given $\bm
y\in\mathcal B_{\leq t}(\bm x)$.
\end{proof}

Let $\mathcal S_n$ be the set of all $q$-ary strings $\bm
x\in\mathbb Z_q^{n}$ such that the first row $c_{1,[n]}$ of
$M_{\bm{x}}$ is $(\bm p, \delta)$-dense, where
$M_{\bm{x}}=(c_{i,j})_{\lceil\log q\rceil\times n}$ is the matrix
representation of $\bm x$. Let
$$\mathcal
E^{\text{b}}_{\text{Den}}:\{0,1\}^{n-1}\rightarrow\mathcal
S^{\text{b}}_n$$ be the one-to-one mapping constructed as in
\eqref{E-B-Den}, where $\mathcal S^{\text{b}}_n$ is the set of all
$(\bm p, \delta)$-dense strings in $\{0,1\}^{n}$. By Lemma
\ref{lem-En-B2Q}, the mapping $\mathcal E^{\text{b}}_{\text{Den}}$
can be extended to a one-to-one mapping, denoted by $$\mathcal
E_{\text{Den}}:\mathbb Z_q^{n-1}\rightarrow\mathbb Z_q^{n},$$ such
that for each $\bm u\in\mathbb Z_q^{n-1}$ and $\bm x=\mathcal
E_{\text{Den}}(\bm u)$, if $M_{\bm{u}}=(b_{i,j})_{\lceil\log
q\rceil\times(n-1)}$ and $M_{\bm{x}}=(c_{i,j})_{\lceil\log
q\rceil\times n}$ are the matrix representation of $\bm u$ and
$\bm x$ respectively, then
$$c_{1,[n]}=\mathcal
E^{\text{b}}_{\text{Den}}(b_{1,[n-1]}).$$ Since $\mathcal
E^{\text{b}}_{\text{Den}}(b_{1,[n-1]})$ is $(\bm p,
\delta)$-dense, then for any $\bm u\in\mathbb Z_q^{n-1}$, we have
$\bm x=\mathcal E_{\text{Den}}(\bm u)\in\mathcal S_n$.

As in Section III, for each binary string $\bm a$, let $\mathcal
Q(\bm a)$ be the $q$-ary representation of $\bm a$. Note that
$\mathcal Q(\bm a)$ is a $q$-ary string of length $\left\lceil|\bm
a|/\lfloor\log q\rfloor\right\rceil$. Using the function $\bar f$
constructed in Construction 3 and the mapping $\mathcal
E_{\text{Den}}$, we can construct an encoding function of a
$q$-ary code capable of correcting a burst of at most $t$
deletions.

Let $\bar{\mathcal E}$ be a function defined on $\mathbb
Z_q^{n-1}$ of the form
\begin{align}\label{Btdel-enc-fun}
\bar{\mathcal E}(\bm u)=(\bm v, \bm v', \bm v''), ~\forall\!~\bm
u\in\mathbb Z_q^{n-1},
\end{align}
where $\bm v=\mathcal E_{\text{Den}}(\bm u)$, $\bm v'=\mathcal
E_{\text{Den}}(\mathcal Q(\bar f(\bm v)))$ and $\bm
v''=\text{Rep}_{t+1}(\mathcal Q(\bar f(\bm v')))$ such that
$\text{Rep}_{t+1}(\cdot)$ is the encoding function of the
$(t+1)$-fold repetition code.

\begin{thm}\label{thm-Btdel-enc}
The code $\bar{\mathcal C}=\{\bar{\mathcal E}(\bm u): \bm
u\in\mathbb Z_q^{n-1}\}$, where $\bar{\mathcal E}$ is given by
\eqref{Btdel-enc-fun}, is a $q$-ary code capable of correcting a
burst of at most $t$ deletions. The redundancy of $\bar{\mathcal
C}$ is at most $\log n+(8\log q+9)\log\log n+o(\log q\log\log
n)+\gamma_t$ in bits, where $\gamma_t$ is a constant depending
only on $t$.
\end{thm}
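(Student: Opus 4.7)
The plan is to mirror the three-layer nested decoding used in the binary analog, Theorem \ref{thm-bbdel-enc}, with recovery at each layer supplied by Theorem \ref{thm-burst-del-sketch} and the $(\bm p,\delta)$-density hypothesis enforced by the construction of $\mathcal E_{\text{Den}}$.

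For the decoding claim, fix $\bm u\in\mathbb Z_q^{n-1}$, let $\bm x=\bar{\mathcal E}(\bm u)=(\bm v,\bm v',\bm v'')$, and let $\bm y\in\mathcal B_{\leq t}(\bm x)$ with $t'=|\bm x|-|\bm y|\leq t$ deletions. For $n$ sufficiently large each of the three segments has length greater than $t$, so the $t'$ consecutive deletions lie entirely inside one of them; thus $\bm y$ splits canonically as $\bm y=(\bm y_1,\bm y_2,\bm y_3)$ with $\bm y_k$ a burst-deletion of the $k$-th segment. I would then decode in reverse order. First, $\bm v''=\text{Rep}_{t+1}(\mathcal Q(\bar f(\bm v')))$ is a codeword of the $(t+1)$-fold repetition code and thus corrects any burst of at most $t$ deletions, so $\bar f(\bm v')$ is recovered from $\bm y_3$. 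Second, since $\bm v'=\mathcal E_{\text{Den}}(\mathcal Q(\bar f(\bm v)))\in\mathcal S_{|\bm v'|}$, the first row of its matrix representation is $(\bm p,\delta)$-dense and Theorem \ref{thm-burst-del-sketch} recovers $\bm v'$ from $\bar f(\bm v')$ and $\bm y_2$. Third, injectivity of $\mathcal E_{\text{Den}}$ and $\mathcal Q$ yields $\bar f(\bm v)$ from $\bm v'$. Finally, $\bm v=\mathcal E_{\text{Den}}(\bm u)\in\mathcal S_n$ also satisfies the density hypothesis, so one more application of Theorem \ref{thm-burst-del-sketch} recovers $\bm v$ from $\bar f(\bm v)$ and $\bm y_1$, after which inverting $\mathcal E_{\text{Den}}$ returns $\bm u$.

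For the redundancy, $\bar{\mathcal C}$ has $q^{n-1}$ codewords and total codeword length $(|\bm v|+|\bm v'|+|\bm v''|)\log q$ bits, so the redundancy equals $\log q+|\bm v'|\log q+|\bm v''|\log q$. Theorem \ref{thm-burst-del-sketch} gives $|\bar f(\bm v)|\leq\log n+8(\log q+1)\log\log n+o(\log q\log\log n)+\gamma_t$, and both $\mathcal Q$-conversion and the one-symbol extension performed by $\mathcal E_{\text{Den}}$ preserve this bit count up to lower-order terms, so $|\bm v'|\log q$ has the same leading order. Since $\bm v'$ has $q$-ary length only $O(\log n/\log q)$, a second application of Theorem \ref{thm-burst-del-sketch} yields $|\bar f(\bm v')|=\log\log n+o(\log\log n)+\gamma_t$, and the $(t+1)$-fold repetition multiplies this by a constant in $t$; the dominant contribution to $|\bm v''|\log q$ is thus $\log\log n$, with $t$-dependent constants absorbed into $\gamma_t$. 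Summing all three contributions yields $\log n+(8\log q+9)\log\log n+o(\log q\log\log n)+\gamma_t$, matching the statement.

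The main obstacle is the clean three-way split of $\bm y$ when the burst falls near a segment boundary; this is resolved by taking $n$ large enough that every segment length exceeds $t$, which forces the $\leq t$ consecutive deletions into a single segment. After this separation every step uses an already-proven recovery guarantee (either the repetition code or Theorem \ref{thm-burst-del-sketch}), and the bit-length arithmetic is just iterating the length bound of Theorem \ref{thm-burst-del-sketch}.
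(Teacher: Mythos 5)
Your overall architecture (reverse-order decoding via the $(t{+}1)$-fold repetition code, then two applications of Theorem~\ref{thm-burst-del-sketch}, then inverting $\mathcal E_{\text{Den}}$) matches the paper, and the redundancy arithmetic is fine. But the step you yourself flag as ``the main obstacle'' is resolved incorrectly. You assert that if every segment has length greater than $t$, then the $t'\le t$ consecutive deletions ``lie entirely inside one of them,'' and you then split $\bm y$ into $(\bm y_1,\bm y_2,\bm y_3)$ accordingly. This is false: a burst of length $2$ occupying positions $m_1$ and $m_1+1$ straddles the $\bm v$--$\bm v'$ boundary no matter how long those segments are. Moreover, even when the burst does lie inside a single segment, the decoder has no way to know \emph{which} segment before decoding, so the ``canonical'' split you invoke is not available to the algorithm. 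As written, the reduction to three independent burst-deletion subproblems is not justified.

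The paper avoids this by not attempting to localize the burst at all. With $m_1=|\bm v|$, $m_2=|(\bm v,\bm v')|$, $m_3=|\bm x|$, and $t'=|\bm x|-|\bm y|$ (all known to the decoder), one observes that the fixed slices $y_{[1,m_1-t']}$, $y_{[m_1,m_2-t']}$, and $y_{[m_2,m_3-t']}$ are \emph{always} elements of $\mathcal B_{\le t}(\bm v)$, $\mathcal B_{\le t}(\bm v')$, and $\mathcal B_{\le t}(\bm v'')$ respectively, regardless of where the burst falls --- when the burst is entirely outside a segment, the corresponding slice is just that segment with a burst trimmed from one end, and when it straddles a boundary, both neighboring slices see a burst near their shared edge. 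This observation replaces your (incorrect) localization claim and makes the three recovery guarantees applicable unconditionally. If you substitute that observation for your split argument, the rest of your proof goes through as in the paper.
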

\begin{proof}
To prove that $\bar{\mathcal C}$ is capable of correcting a burst
of at most $t$ deletions, we adopt a similar strategy as in the
proof of Theorem \ref{thm-2del-enc}. Specifically, let
$$\bm x=\bar{\mathcal E}(\bm u)=(\bm v, \bm v', \bm
v'')\in\bar{\mathcal C},$$ where $\bm v=\mathcal
E_{\text{Den}}(\bm u)$, $\bm v'=\mathcal E_{\text{Den}}(\mathcal
Q(\bar f(\bm v)))$ and $\bm v''=\text{Rep}_{t+1}(\mathcal Q(\bar
f(\bm v')))$. Given any $\bm y\in\mathcal B_{\leq t}(\bm x)$,
denoting $t'=|\bm x|-|\bm y|$, then $t'\leq t$ and we have
$y_{[1,m_1-t']}\in\mathcal B_{\leq t}(\bm v)$,
$y_{[m_1,m_2-t']}\in\mathcal B_{\leq t}(\bm v')$ and
$y_{[m_2,m_3-t']}\in\mathcal B_{\leq t}(\bm v'')$, where $m_1=|\bm
v|, m_2=|(\bm v,\bm v')|$ and $m_3=|\bm x|=|(\bm v,\bm v',\bm
v'')|$. First, since $\bm v''=\text{Rep}_{t+1}(\mathcal Q(\bar
f(\bm v')))$ is a codeword of a $t$-deletion code, then $\mathcal
Q(\bar f(\bm v'))$, and so $\bar f(\bm v')$, can be recovered from
$y_{[m_2,m_3-t']}$. Further, by Theorem
\ref{thm-burst-del-sketch}, $\bm v'$ can be recovered from
$y_{[m_1,m_2-t']}$ and $\bar{f}(\bm v')$, and so $\bar{f}(\bm v)$
can be recovered from $\bm v'=\mathcal E_{\text{Den}}(\mathcal
Q(\bar f(\bm v)))$. Finally, by Theorem \ref{thm-burst-del-sketch}
again, $\bm v$ can be recovered from $y_{[1,m_1-t']}$ and
$\bar{f}(\bm v)$. Thus, $\bm x=(\bm v, \bm v', \bm v'')$ can be
recovered from any $\bm y\in\mathcal B_{\leq t}(\bm x)$, which
proves that $\bar{\mathcal C}$ is capable of correcting a burst of
at most $t$ deletions.

Since $\bm u\in\mathbb Z_q^{n-1}$ and $\bm v=\mathcal
E_{\text{Den}}(\bm u)\in\mathcal S_n\subseteq\mathbb Z_q^{n}$, so
$\bm v$ has $\log q$ bits redundancy. Moreover, by Theorem
\ref{thm-burst-del-sketch}, the length of $\bm v'$ is $$|\bm
v'|=\log n+8(\log q+1)\log\log n+o(\log q\log\log n)+\gamma_t$$
bits and the length of $\bm v''$ is \begin{align*}|\bm
v''|&=\log|\bm v'|+8(\log q+1)\log\log |\bm v'|+o(\log q\log\log
|\bm b'|)\\&~~~+\gamma_t\\&=\log\log n+\gamma_t+o(\log q\log\log
n)\end{align*} in bits. So the total redundancy of $\bm
c=\bar{\mathcal E}^{\text{b}}(\bm a)$ is
\begin{align*}\text{redundancy~of}~\bar{\mathcal
C}&=\log q+|\bm v'|+|\bm v''|\\&=\log n+(8\log q+9)\log\log
n\\&~~~+o(\log q\log\log n)+\gamma_t\end{align*} bits.
\end{proof}

\section{Conclusions}
We constructed systematic $q$-ary two-deletion correcting codes
and $q$-ary burst-deletion correcting codes, where $q\geq 2$ is an
even integer. For $q$-ary two-deletion codes, the redundancy of
our construction is $\log n$ higher than the best known explicit
binary codes and is lower than all existing explicit $q$-ary
codes. For $q$-ary burst-deletion codes, our construction is
scaling-optimal in redundancy.

It is also an interesting problem to generalize our constructions
to odd $q$. Another interesting problem is to construct explicit
$q$-ary $t$-deletion correcting codes that improves upon the state
of the art constructions in redundancy.

\appendices

\section{Proof of Lemma \ref{lem-2del-pstn}}

In this appendix, we prove Lemma \ref{lem-2del-pstn}. We first
need to prove the following lemma.

\begin{lem}\label{lem-2del-yc}
Suppose $\bm{c}\in\{0,1\}^n$ and $\{j_1,j_2\},
\{j_1',j_2'\}\subseteq[n]$ such that $j_1<j_2, j'_1<j'_2$ and
$j_1\leq j_1'$. If $c_{[n]\backslash\{j_1,j_2\}}
=c_{[n]\backslash\{j'_1,j'_2\}}$, then one of the following holds:
\begin{itemize}
 \item[1)] $c_{j_1},c_{j'_1}$ are in the same run of $\bm{c}$,
 and $c_{j_2},c_{j'_2}$ are in the same run of
 $\bm{c}$.
 \item[2)] There is an alternative substring $c_{[s_1,s_2]}$ of $\bm
 c$ of length $\geq 3$ such that $c_{j_1}$ and $c_{s_1}$ are in the
 same run of $\bm{c}$, $j_2=s_1+1$, $j_1'=s_2-1$ and
 $c_{j_2'}$ and $c_{s_2}$ are in the same run of $\bm{c}$.
\end{itemize}
\end{lem}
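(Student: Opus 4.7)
The plan is to prove the lemma by a direct index-shift analysis. Since $c_{[n]\setminus\{j_1,j_2\}} = c_{[n]\setminus\{j_1',j_2'\}}$, matching the $k$-th coordinate on both sides yields a scalar equation $c_{f(k)} = c_{g(k)}$, where $f$ (resp.\ $g$) is the piecewise-linear shift that inserts $+0$, $+1$, or $+2$ past the deletion points $j_1,j_2$ (resp.\ $j_1',j_2'$). I would collect all such equations and run a case split on the relative order of $j_1,j_2,j_1',j_2'$ under the hypothesis $j_1\le j_1'$, showing that the solutions force exactly the structure described in items 1) and 2).

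Two situations are easy. If $j_1=j_1'$, the equations reduce to a single-deletion comparison inside $c_{[n]\setminus\{j_1\}}$, so the usual single-deletion argument puts $c_{j_2},c_{j_2'}$ in a common run of $c_{[n]\setminus\{j_1\}}$; since $j_2,j_2'>j_1$, this run lifts to a run of $\bm c$ and item 1) holds. If $j_1<j_1'<j_2$ (treating $j_2<j_2'$, $j_2=j_2'$, and $j_2>j_2'$ uniformly), direct matching of the shifts yields $c_{j_1}=c_{j_1+1}=\cdots=c_{j_1'}$ together with an analogous equality chain $c_{\min(j_2,j_2')}=\cdots=c_{\max(j_2,j_2')}$, and item 1) again follows.

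The substantive case is $j_1<j_2\le j_1'<j_2'$. Here the matched equations split into three pieces: a run constraint $c_{j_1}=c_{j_1+1}=\cdots=c_{j_2-1}=:a$; a run constraint $c_{j_1'+1}=c_{j_1'+2}=\cdots=c_{j_2'}=:b$; and a skip constraint $c_k=c_{k+2}$ for every $k\in[j_2-1,\,j_1'-1]$. Writing $a':=c_{j_2}$, the skip constraint forces $c_{[j_2-1,\,j_1'+1]}$ to take the value $a$ on one parity class of indices and $a'$ on the other. If $a=a'$, the three pieces combine to make $c_{[j_1,\,j_2']}$ a single run, giving item 1). Otherwise $a\ne a'$, and $c_{[j_2-1,\,j_1'+1]}$ is an alternative substring of length $j_1'-j_2+3\ge 3$; setting $s_1=j_2-1$ and $s_2=j_1'+1$ then makes each clause of item 2) a direct check, using the flanking constant stretches $[j_1,j_2-1]$ and $[j_1'+1,j_2']$ to place $c_{j_1}$ and $c_{s_1}$ in one run and $c_{j_2'}$ and $c_{s_2}$ in another.

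The main obstacle I anticipate is the bookkeeping in this last case: one has to track the parity of the skip chain carefully enough to conclude $b\in\{a,a'\}$, and to verify that the alternation in $c_{[j_2-1,\,j_1'+1]}$ propagates to both endpoints without collapse. The flanking constant runs are also what prevent the alternative substring from being extendable further inside $\bm c$, which is what keeps the identifications $j_2=s_1+1$ and $j_1'=s_2-1$ clean. Once that parity/propagation step is in place, the remaining verifications in items 1) and 2) reduce to inspection of the collected equations.
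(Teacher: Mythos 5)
Your proposal is correct and follows essentially the same route as the paper: match coordinates under the two index shifts, split on the order type of $j_1,j_2,j_1',j_2'$, observe that the non-overlapping case $j_1<j_2\le j_1'<j_2'$ is the only one producing a period-$2$ (skip) constraint, and then distinguish $c_{j_1}=c_{j_2}$ (a single run, item 1) from $c_{j_1}\neq c_{j_2}$ (an alternating block with endpoints $s_1=j_2-1$, $s_2=j_1'+1$, item 2). Where you reason abstractly about a period-$2$ constraint on $[j_2-1,j_1'+1]$ and note that parity determines the endpoint value, the paper instead explicitly enumerates the two parity subcases of $j_1'-j_2$ and writes out the equality chains in full; your formulation is a cleaner packaging of the same computation, and the remaining verifications you flag (that $c_{j_1}$ shares a run with $c_{s_1}$, etc.) indeed reduce to inspecting those chains, exactly as the paper does.
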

\begin{proof}
If $\{j_1,j_2\}=\{j'_1,j'_2\}$, then 1) holds. In the following,
we assume that $\{j_1,j_2\}\neq\{j'_1,j'_2\}$. Since
$c_{[n]\backslash\{j_1,j_2\}} =c_{[n]\backslash\{j'_1,j'_2\}}$, we
can denote $\bm b=c_{[n]\backslash\{j_1,j_2\}}
=c_{[n]\backslash\{j'_1,j'_2\}}.$ From $\bm
b=c_{[n]\backslash\{j_1,j_2\}}$ we can obtain
\begin{equation}\label{2-del-yc1}
b_i=\!\left\{\!\begin{aligned}
&c_i, ~~~~\text{for}~i\in[1,j_1-1],\\
&c_{i+1}, ~\text{for}~i\in[j_1,j_2-2],\\
&c_{i+2}, ~\text{for}~i\in[j_2-1,n-2].
\end{aligned}\right.
\end{equation}
Similarly, from $\bm b=c_{[n]\backslash\{j'_1,j'_2\}}$ we can
obtain
\begin{equation}\label{2-del-yc2}
b_i=\!\left\{\!\begin{aligned}
&c_i, ~~~~\text{for}~i\in[1,j'_1-1],\\
&c_{i+1}, ~\text{for}~i\in[j'_1,j'_2-2],\\
&c_{i+2}, ~\text{for}~i\in[j'_2-1,n-2].
\end{aligned}\right.
\end{equation}
Since $j_1<j_2$, $j'_1<j'_2$ and $j_1\leq j_1'$, then we can
divide our discussions into the following three cases:

Case 1: $j_1<j_2\leq j'_1<j'_2$. Combining \eqref{2-del-yc1} and
\eqref{2-del-yc2}, we can obtain
\begin{align}\label{eq1-2del-cmpr}
\bm b=~&c_1\cdots c_{j_1-1}c_{j_1+1}c_{j_1+2}\cdots
c_{j_2-1}c_{j_2+1}c_{j_2+2}c_{j_2+3}c_{j_2+4}\cdots
c_{j'_1-2}c_{j'_1-1}~c_{j'_1}~~c_{j'_1+1}c_{j'_1+2}c_{j'_1+3}\cdots
c_{j'_2-1}~c_{j'_2}~~c_{j'_2+1}\cdots c_n\nonumber\\
=~&c_1\cdots c_{j_1-1}~c_{j_1}~~c_{j_1+1}\cdots
c_{j_2-2}c_{j_2-1}~c_{j_2}~~c_{j_2+1}c_{j_2+2}\cdots
c_{j'_1-4}c_{j'_1-3}c_{j'_1-2}
c_{j'_1-1}c_{j'_1+1}c_{j'_1+2}\cdots
c_{j'_2-2}c_{j'_2-1}c_{j'_2+1}\cdots c_n.\end{align} We further
need to divide this case into the following two subcases.

Case 1.1: $j_1'-j_2$ is odd. By comparing the symbols of the
corresponding positions in $c_{[n]\backslash\{j_1,j_2\}}$ and
$c_{[n]\backslash\{j'_1,j'_2\}}$, we can obtain
\begin{align*}
c_{j_1}&=c_{j_1+1}=\cdots=c_{j_2-2}=c_{j_2-1}=c_{j_2+1}=c_{j_2+3}=\cdots\\&
=c_{j'_1-4}=c_{j'_1-2}=c_{j'_1}\end{align*} and
\begin{align*}
c_{j_2}&=c_{j_2+2}=c_{j_2+4}=\cdots=c_{j_1'-3}=c_{j_1'-1}=c_{j_1'+1}\\
&=c_{j_1'+2}=c_{j_1'+3}=\cdots=c_{j'_2}.\end{align*}

Case 1.2: $j_1'-j_2$ is even. By comparing the symbols of the
corresponding positions in $c_{[n]\backslash\{j_1,j_2\}}$ and
$c_{[n]\backslash\{j'_1,j'_2\}}$, we can obtain
\begin{align*}
c_{j_1}&=c_{j_1+1}=c_{j_1+2}=\cdots=c_{j_2-2}=c_{j_2-1}=c_{j_2+1}\\&
=c_{j_2+3}=\cdots=c_{j'_1-3}=c_{j'_1-1}=c_{j'_1+1}=c_{j'_1+2}\\
&=c_{j'_1+3}=\cdots=c_{j'_2-1}=c_{j'_2}\end{align*} and
\begin{align*}
c_{j_2}&=c_{j_2+2}=c_{j_2+4}=\cdots=c_{j_1'-4}=c_{j_1'-2}=c_{j_1'}\end{align*}

For the both subcases, we can see that
\begin{itemize}
\item If $c_{j_1}=c_{j_2}$, then we have
$c_{j_1}=c_{j_1+1}=\cdots=c_{j_2'}$, so $c_{j_1},c_{j_2},c_{j'_1}$
and $c_{j'_2}$ are in the same run of $\bm c$, which implies that
1) of Lemma \ref{lem-2del-yc} holds. \item If $c_{j_1}\neq
c_{j_2}$, then $c_{[j_2-1,j_1'+1]}$ is an alternative substring of
$\bm c$ of length $\geq 3$. By letting $s_1=j_2-1$ and
$s_2=j_1'+1$, we obtain 2) of Lemma
\ref{lem-2del-yc}.\end{itemize}

Example \ref{exam-2del-c1}) is an illustration of this case.

Case 2: $j_1\leq j'_1<j_2\leq j'_2$. In this case, combining
\eqref{2-del-yc1} and \eqref{2-del-yc2}, we can obtain
\begin{align}\label{eq2-2del-cmpr}
\bm b= &c_1\cdots c_{j_1-1}c_{j_1+1}c_{j_1+2}\cdots
c_{j'_1-1}~c_{j'_1}~~c_{j'_1+1}c_{j'_1+2}\cdots
c_{j_2-1}c_{j_2+1}c_{j_2+2}c_{j_2+3}\cdots
c_{j'_2-1}~c_{j'_2}~~c_{j'_2+1}\cdots c_n\nonumber\\
=&c_1\cdots c_{j_1-1}~c_{j_1}~~c_{j_1+1}\cdots
c_{j_1'-2}c_{j_1'-1}c_{j_1'+1}c_{j_1'+2}\cdots
c_{j_2-1}~c_{j_2}~~c_{j_2+1}c_{j_2+2}\cdots
c_{j'_2-2}c_{j'_2-1}c_{j'_2+1}\cdots c_n,\end{align} from which we
see that
$$c_{j_1}=c_{j_1+1}=\cdots=c_{j_1'}$$
and
$$c_{j_2}=c_{j_2+1}=\cdots=c_{j'_2}.$$
Hence, 1) of Lemma \ref{lem-2del-yc} holds.

Case 3: $j_1\leq j'_1<j'_2<j_2$. In this case, combining
\eqref{2-del-yc1} and \eqref{2-del-yc2}, we can obtain
\begin{align}\label{eq3-2del-cmpr}
\bm b= &c_1\cdots c_{j_1-1}c_{j_1+1}c_{j_1+2}\cdots
c_{j'_1-1}~c_{j'_1}~~c_{j'_1+1}c_{j'_1+2}\cdots
c_{j_2'-1}~c_{j_2'}~~c_{j_2'+1}c_{j_2'+2}\cdots
c_{j_2-1}c_{j_2+1}c_{j_2+2}\cdots c_n\nonumber\\
=&c_1\cdots c_{j_1-1}~c_{j_1}~~c_{j_1+1}\cdots
c_{j_1'-2}c_{j_1'-1}c_{j_1'+1}c_{j_1'+2}\cdots
c_{j_2'-1}c_{j_2'+1}c_{j_2'+2}c_{j_2'+3}\cdots
~c_{j_2}~~c_{j_2+1}c_{j_2+2}\cdots c_n,\end{align} from which we
can see that
$$c_{j_1}=c_{j_1+1}=\cdots=c_{j_1'}$$
and
$$c_{j_2'}=c_{j_2'+1}=\cdots=c_{j_2}.$$
Hence, 1) of Lemma \ref{lem-2del-yc} holds.
\end{proof}

\begin{exam}\label{exam-2del-c1}
To illustrate the Case 1 in the proof of Lemma \ref{lem-2del-yc},
let's consider the following two examples.
\begin{itemize}
 \item Consider $\bm c=01000101011110$. Let $j_1=3$,
 $j_2=6$, $j_1'=9$ and $j_2'=12$. Then $c_{[n]\backslash\{j_1,j_2\}}
 =c_{[n]\backslash\{j'_1,j'_2\}}=010001011110$ and $j_1'-j_2=9-6=3$
 is odd. We can find that
 $c_{j_1}=\cdots=c_{j_2-1}=c_{j_2+1}=c_{j_2+3}=\cdots=c_{j'_1}=0$
 and
 $c_{j_2}=c_{j_2+2}=\cdots=c_{j'_1-1}=c_{j'_1+1}=\cdots=c_{j'_2}=1$.
 \item Consider $\bm c=01000101010001$. Let $j_1=3$,
 $j_2=6$, $j_1'=10$ and $j_2'=12$. Then $c_{[n]\backslash\{j_1,j_2\}}
 =c_{[n]\backslash\{j'_1,j'_2\}}=010001010001$ and $j_1'-j_2=10-6=4$
 is even. We can find that
 $c_{j_1}=\cdots=c_{j_2-1}=c_{j_2+1}=c_{j_2+3}=\cdots=c_{j'_1-1}
 =c_{j'_1+1}=c_{j'_1+2}=\cdots=c_{j'_2}=0$ and
 $c_{j_2}=c_{j_2+2}=\cdots=c_{j'_1}=1$.
\end{itemize}
\end{exam}

Now, we can prove Lemma \ref{lem-2del-pstn}.

\begin{proof}[Proof of Lemma \ref{lem-2del-pstn}]
Suppose $\bm c\in\{0,1\}^n$ is regular and $\bm b\in\{0,1\}^{n-2}$
such that $\bm b$ can be obtained from $\bm c$ by deleting two
symbols of $\bm c$. Then we can always find two symbols of $\bm
c$, say $c_{j_1}$ and $c_{j_2}~(j_1<j_2)$, such that, $\bm
b=c_{[n]\backslash\{j_1,j_2\}}$.

First, suppose $c_{j_1}$ and $c_{j_2}$ are in the same run of $\bm
c$. Then for any $\{j_1',j_2'\}\subseteq[n]$ such that $\bm
b=c_{[n]\backslash\{j'_1,j'_2\}}$, it is easy to see that 2) of
Lemma \ref{lem-2del-yc} can't hold. $($Otherwise, there is an
alternative substring $c_{[s_1,s_2]}$ of $\bm c$ of length $\geq
3$ such that $c_{j_1}$, $c_{s_1}$ are in the same run of $\bm{c}$
and $j_2=s_1+1$, which implies that $c_{j_1}=c_{s_1}\neq
c_{s_1+1}=c_{j_2}$, which contradicts to the assumption that
$c_{j_1}$ and $c_{j_2}$ are in the same run of $\bm c$.$)$
Therefore, 1) of Lemma \ref{lem-2del-yc} must hold, which implies
that there is a run $c_{J}$ of $\bm c$, where $J\subseteq[n]$ is
an interval, such that $\bm b$ is obtained from $\bm c$ by
deleting two symbols in $c_{J}$. Since $\bm c\in\{0,1\}^n$ is
regular, by Remark \ref{rem-Regu-length}, the length of $c_{J}$ is
at most $d\log n<\rho=3d\log n$. Thus, 2) of Lemma
\ref{lem-2del-pstn} holds.

In the following, we suppose that $c_{j_1}$ and
$c_{j_2}~(j_1<j_2)$ are in two different runs of $\bm c$.
Specifically, suppose $j_1\in J=[i_1,i_2]\subseteq[n]$ and $j_2\in
J'=[i'_1,i'_2]\subseteq[n]$ such that $c_{J}$ and $c_{J'}$ are two
different runs of $\bm c$. Since $j_1<j_2$, then $$i_1\leq
i_2<i_1'\leq i_2'.$$ We need to consider the following two cases.

Case 1: $i_1'>i_2+1$. Since $j_2\in J'=[i'_1,i'_2]$, then $j_2\geq
i_1'>i_2+1$. For any $\{j_1',j_2'\}\subseteq[n]$ such that $\bm
b=c_{[n]\backslash\{j'_1,j'_2\}}$, it is easy to see that 2) of
Lemma \ref{lem-2del-yc} can't hold. $($Otherwise, there is an
alternative substring $c_{[s_1,s_2]}$ of $\bm c$ of length $\geq
3$ such that $c_{j_1}$, $c_{s_1}$ are in the same run of $\bm{c}$
and $j_2=s_1+1$, which implies that $s_1=i_2$ and
$j_2=s_1+1=i_2+1$, which contradicts to the fact that $j_2\geq
i_1'>i_2+1.)$ Therefore, 1) of Lemma \ref{lem-2del-yc} must hold,
which implies that $j_1'\in J=[i_1,i_2]$ and $j_2'\in
J'=[i'_1,i'_2]'$. Thus, 1) of Lemma \ref{lem-2del-pstn} holds.

Case 2: $i_1'=i_2+1$. We need to consider the following two
subcases.

Case 2.1: $|J|\geq 2$ and $|J'|\geq 2$. Then for any
$\{j_1',j_2'\}\subseteq[n]$ such that $\bm
b=c_{[n]\backslash\{j'_1,j'_2\}}$, it is easy to see that 2) of
Lemma \ref{lem-2del-yc} can't hold because no such alternative
substring $c_{[s_1,s_2]}$ of $\bm c$ can be found. Therefore, 1)
of Lemma \ref{lem-2del-yc} must hold, which implies that $j_1'\in
J=[i_1,i_2]$ and $j_2'\in J'=[i'_1,i'_2]'$. Thus, 1) of Lemma
\ref{lem-2del-pstn} holds.

Case 2.2: $|J|=1$ or $|J'|=1$. Without loss of generality, assume
$|J|=1$. Then $i_1=i_2$ and $c_{i_1-1}c_{i_1}c_{i_1+1}$ is an
alternative substring of $\bm c$. Let $c_{[\lambda_1,\lambda_2]}$
be the maximal alternative substring of $\bm c$ that contains
$c_{i_1-1}c_{i_1}c_{i_1+1}$, where
$[\lambda_1,\lambda_2]\subseteq[n]$ is an interval. Let
$c_{[\lambda_0,\lambda_1]}~($if $\lambda_1>1)$ and
$c_{[\lambda_2,\lambda_3]}~($if $\lambda_2<n)$ be two runs of $\bm
c$. For any $\{j_1',j_2'\}\subseteq[n]$ such that $\bm
b=c_{[n]\backslash\{j'_1,j'_2\}}$, by Lemma \ref{lem-2del-yc}, we
have $\{j_1',j_2'\}\subseteq[\lambda_0,\lambda_3]$. Since $\bm
c\in\{0,1\}^n$ is regular, by Remark \ref{rem-Regu-length}, the
length of the alternative substring $c_{[\lambda_1,\lambda_2]}$ of
$\bm c$ is at most $d\log n$, and the lengths of the runs
$c_{[\lambda_0,\lambda_1]}$, $c_{[\lambda_2,\lambda_3]}$ of $\bm
c$ are both at most $d\log n$. Hence, the length of
$c_{[\lambda_0,\lambda_3]}$ is at most $\rho=3d\log n$. Thus, 2)
of Lemma \ref{lem-2del-pstn} holds.

By the above discussions, we proved that exact one of the two
claims of Lemma \ref{lem-2del-pstn} holds.
\end{proof}

As an example, suppose $\bm c=011000101011110100$. We consider the
following cases.
\begin{itemize}
 \item If $\bm b=0110101011110100$, then $\bm b$ can be obtained
 from $\bm c$ by deleting two symbols in the run $c_{[4,6]}=000$.
 \item If $\bm b=0110010011110100$, then $\bm b$ can be obtained
 from $\bm c$ by deleting one symbol in the run $c_{[4,6]}=000$ and
 one symbol in the run $c_{[9,9]}=1$. This case is an example of
 Case 1 in the proof of Lemma \ref{lem-2del-pstn}.
 \item If $\bm b=0100101011110100$, then $\bm b$ can be obtained
 from $\bm c$ by deleting one symbol in the run $c_{[2,3]}=11$ and
 one symbol in the run $c_{[4,6]}=000$. This case is an example of
 Case 2.1 in the proof of Lemma \ref{lem-2del-pstn}.
 \item If $\bm b=0110001011110100$, then $\bm b$ can be obtained
 from $\bm c$ by deleting two symbols in the substring
 $c_{[4,14]}=00010101111$, which may be any of the following cases:
  i) one symbol in the run
 $c_{[4,6]}=000$ and the symbol $c_{7}=1$; ii) the symbols
 $c_i,c_{i+1}$ for $i\in\{7,8,9\}$; iii) one symbol in the run
 $c_{[11,14]}=1111$ and the symbol $c_{10}=0$. This case is an example of
 Case 2.2 in the proof of Lemma \ref{lem-2del-pstn}.
\end{itemize}

\end{document}